\newtheorem{definition}{Definition}
\newtheorem{proposition}[definition]{Proposition}
\newtheorem{lemma}[definition]{Lemma}
\newtheorem{theorem}[definition]{Theorem}
\def\squareforqed{\hbox{\rlap{$\sqcap$}$\sqcup$}}
\def\qed{\ifmmode\squareforqed\else{\unskip\nobreak\hfil
\penalty50\hskip1em\null\nobreak\hfil\squareforqed
\parfillskip=0pt\finalhyphendemerits=0\endgraf}\fi}
\def\endenv{\ifmmode\;\else{\unskip\nobreak\hfil
\penalty50\hskip1em\null\nobreak\hfil\;
\parfillskip=0pt\finalhyphendemerits=0\endgraf}\fi}
\newenvironment{proof}{\noindent \textbf{{Proof~} }}{\qed}
\newenvironment{remark}{\noindent \textbf{{Remark~}}}{}
\mathchardef\ordinarycolon\mathcode`\:
\def\vcentcolon{\mathrel{\mathop\ordinarycolon}}
\newcommand{\nc}{\newcommand}
\nc{\rnc}{\renewcommand}
\nc{\beq}{\begin{equation}}
\nc{\eeq}{\end{equation}}
\nc{\beqa}{\begin{eqnarray}}
\nc{\eeqa}{\end{eqnarray}}
\nc{\lbar}[1]{\overline{#1}}
\nc{\bra}[1]{\langle#1|}
\nc{\ket}[1]{|#1\rangle}
\nc{\ketbra}[2]{|#1\rangle\!\langle#2|}
\nc{\braket}[2]{\langle#1|#2\rangle}
\nc{\proj}[1]{| #1\rangle\!\langle #1 |}
\nc{\avg}[1]{\langle#1\rangle}
\nc{\Rank}{\operatorname{Rank}}
\nc{\smfrac}[2]{\mbox{$\frac{#1}{#2}$}}
\nc{\tr}{\operatorname{Tr}}
\nc{\ox}{\otimes}
\nc{\dg}{\dagger}
\nc{\dn}{\downarrow}
\nc{\cA}{{\cal A}}
\nc{\cB}{{\cal B}}
\nc{\cC}{{\cal C}}
\nc{\cD}{{\cal D}}
\nc{\cE}{{\cal E}}
\nc{\cF}{{\cal F}}
\nc{\cG}{{\cal G}}
\nc{\cH}{{\cal H}}
\nc{\cI}{{\cal I}}
\nc{\cJ}{{\cal J}}
\nc{\cK}{{\cal K}}
\nc{\cL}{{\cal L}}
\nc{\cM}{{\cal M}}
\nc{\cN}{{\cal N}}
\nc{\cO}{{\cal O}}
\nc{\cP}{{\cal P}}
\nc{\cQ}{{\cal Q}}
\nc{\cR}{{\cal R}}
\nc{\cS}{{\cal S}}
\nc{\cT}{{\cal T}}
\nc{\cX}{{\cal X}}
\nc{\cY}{{\cal Y}}
\nc{\cZ}{{\cal Z}}
\nc{\csupp}{{\operatorname{csupp}}}
\nc{\qsupp}{{\operatorname{qsupp}}}
\nc{\diag}{{\operatorname{diag}}}
\nc{\var}{{\operatorname{var}}}
\nc{\rar}{\rightarrow}
\nc{\lrar}{\longrightarrow}
\nc{\polylog}{{\operatorname{polylog}}}
\nc{\wt}{{\operatorname{wt}}}
\nc{\av}[1]{{\left\langle {#1} \right\rangle}}
\nc{\RR}{{{\mathbb R}}}
\nc{\CC}{{{\mathbb C}}}
\nc{\FF}{{{\mathbb F}}}
\nc{\NN}{{{\mathbb N}}}
\nc{\ZZ}{{{\mathbb Z}}}
\nc{\PP}{{{\mathbb P}}}
\nc{\QQ}{{{\mathbb Q}}}
\nc{\UU}{{{\mathbb U}}}
\nc{\EE}{{{\mathbb E}}}
\nc{\id}{{\operatorname{id}}}
\nc{\CHSH}{{\operatorname{CHSH}}}
\nc{\be}{\begin{equation}}
\nc{\ee}{{\end{equation}}}
\nc{\bea}{\begin{eqnarray}}
\nc{\eea}{\end{eqnarray}}
\nc{\Hom}[2]{\mbox{Hom}(\CC^{#1},\CC^{#2})}
\nc{\rU}{\mbox{U}}
\nc{\ob}[1]{#1}
\begin{document}

\title{Resource Theory of Coherence --- Beyond States}

\author{Khaled Ben Dana}%
\email{khaled.bendana@gmail.com}
\affiliation{Laboratory of Advanced Materials and Quantum Phenomena, Department of Physics, 
             Faculty of Sciences of Tunis, University Tunis El Manar, Tunis 2092, Tunisia}

\author{Mar\'ia Garc\'ia D\'iaz}
\email{maria.garciadia@e-campus.uab.cat}
\affiliation{Departament de F\'isica: Grup d'Informaci\'o Qu\`antica,
             Universitat Aut\`onoma de Barcelona, ES-08193 Bellaterra (Barcelona), Spain}

\author{Mohamed Mejatty}%
\email{mohamedmejatty@yahoo.fr}
\affiliation{Laboratory of Advanced Materials and Quantum Phenomena, Department of Physics, 
             Faculty of Sciences of Tunis, University Tunis El Manar, Tunis 2092, Tunisia}

\author{Andreas Winter}%
\email{andreas.winter@uab.cat}
\affiliation{Departament de F\'isica: Grup d'Informaci\'o Qu\`antica,
             Universitat Aut\`onoma de Barcelona, ES-08193 Bellaterra (Barcelona), Spain}
\affiliation{ICREA---Instituci\'o Catalana de Recerca i Estudis Avan\c{c}ats, 
             Pg. Llu\'is Companys, 23, ES-08001 Barcelona, Spain}

\date{22 May 2017}

\begin{abstract}
We generalize the recently proposed resource theory of coherence (or superposition) 
[Baumgratz, Cramer \&{} Plenio, Phys. Rev. Lett. 113:140401; 
Winter \&{} Yang, Phys. Rev. Lett. 116:120404] 
to the setting where not only the free (``incoherent'') resources,
but also the objects manipulated, are quantum operations rather than states. 

In particular, we discuss an information theoretic notion of
coherence capacity of a quantum channel, and prove a single-letter
formula for it in the case of unitaries. 
Then we move to the coherence cost of simulating a 
channel, and prove achievability results for
unitaries and general channels acting on a $d$-dimensional system;
we show that a maximally coherent state of rank $d$ is always
sufficient as a resource if incoherent operations are allowed,
and rank $d^2$ for ``strictly incoherent'' operations.
We also show lower bounds on the simulation cost of channels
that allow us to conclude that there exists bound coherence
in operations, i.e.~maps with non-zero cost of implementing
them but zero coherence capacity; this is in contrast to states, which
do not exhibit bound coherence.
\end{abstract}

\maketitle

\section{Introduction}
\label{sec:intro}
Since its discovery to our days, quantum mechanics has provided a mathematical framework 
for the construction of physical theories. 
Indeterminism, interference, uncertainty, superposition and entanglement are concepts of quantum 
mechanics that distinguish it from classical physics, and which have become resources
in quantum information processing. 

Quantum resource theories aim at capturing the essence of these traits 
and quantifying them. Recently, quantum resource theories have been 
formulated in different areas of physics such as the resource theory of athermality 
in thermodynamics \cite{Brandao1,Brandao2,Horodecki,Faist,Gour,Narasimhachar}
and the resource theory of asymmetry \cite{Gour1,Marvian}. 
Furthermore, general structural frameworks of quantum 
resource theories have been proposed \cite{Brandao3}.

Resource theories using concepts of quantum mechanics have been developed since 
the appearance of the theory of entanglement \cite{Plenio1,Vedral,HHHH}. 
Very recently, Baumgratz \emph{et al.} \cite{Baumgratz}, 
following earlier work by \AA{}berg \cite{Aaberg}, have made quantum 
coherence itself, i.e.~the concept of superposition
\[
  \ket{\psi} = \sum_i \psi_i \ket{i},
\]
the subject of a resource theory; see also \cite{Plenio-2}.

The present paper is concerned with this resource theory of coherence,
and here we briefly recall its fundamental definitions, as well as some 
important coherence measures; for a comprehensive review, see \cite{Streltsov-rev}. 
Let $\{\ket{i} : i = 0,\ldots,d-1\}$ 
be a particular fixed basis of the $d$-dimensional Hilbert space $\cH$;
then all density matrices in this basis are ``incoherent'', 
i.e.~those of the form 
$\tilde{\delta}=\sum_{i=0}^{d-1} \delta_{i}\proj{i}$. We denote by
$\Delta\subset \cS(\cH)$ the set of such incoherent quantum states. 

The definition of coherence monotones requires the identification of 
operations that are incoherent. These map the set of incoherent states 
to itself. More precisely, such a completely positive and trace preserving 
(cptp) map is specified by a set of Kraus operators $\{K_{\alpha}\}$ 
satisfying $\sum_{\alpha} K_{\alpha}^{\dagger}K_{\alpha}=\1$ and 
$K_{\alpha}{\Delta}K_{\alpha}^{\dagger}\subset{\Delta}$ for all $\alpha$.
A Kraus operator with this property is called \emph{incoherent};
we call it \emph{strictly incoherent} if both $K$ and $K^\dagger$ are
incoherent \cite{Winter,Yadin}.
We distinguish two classes of incoherent operations (IO): 

(i) Incoherent completely positive and trace 
preserving quantum operations (non-selective maps) $T$, 
which act as $T(\rho)=\sum_{\alpha}K_{\alpha}\rho K_{\alpha}^{\dagger}$
(note that this formulation implies the loss of information about 
the measurement outcome);

(ii) quantum operations for which measurement outcomes are retained, 
given by $\rho_{\alpha}=\frac{1}{p_\alpha}K_{\alpha}\rho K_{\alpha}^{\dagger}$ occurring
with probability $p_{\alpha} = \tr K_{\alpha}\rho K_{\alpha}^{\dagger}$. 
The latter can be modelled as a nonselective operation by explicitly 
introducing a new register to hold the (incoherent) measurement result:
\[
  \widetilde{T}(\rho) = \sum_\alpha K_\alpha \rho K_\alpha^\dagger \otimes \proj{\alpha}.
\]
Here, we have made use of the convention that when composing systems, the
incoherent states in the tensor product space are precisely the tensor products of
incoherent states and their probabilistic mixtures (convex combinations) \cite{Aaberg}.
An operation is called strictly incoherent (SIO), if it can be written
with strictly incoherent Kraus operators.

We define also the maximally coherent state on a $d$-dimensional system by 
$\ket{\Psi_{d}} = \frac{1}{\sqrt{d}}\sum_{i=0}^{d-1} \ket{i}$, from which every state 
in dimension $d$ can be prepared \cite{Baumgratz,Winter,Du1}. 
Note that the definition of maximally coherent state is 
independent of a specific measure for the coherence \cite{Baumgratz,Du2}.

A list of desirable conditions for any coherence measure, i.e.~a functional
from states to non-negative real numbers, was also presented \cite{Baumgratz}:
\begin{enumerate}
  \item $C(\rho)=0$ for all $\rho\in \Delta$;
  \item Monotonicity under non-selective incoherent maps: $C(\rho) \geq C(T(\rho))$;
  \item (Strong) monotonicity under selective incoherent maps:
        $C(\rho) \geq \sum_{\alpha} p_\alpha C(\rho_{\alpha})$;
  \item Convexity: $\sum_i p_i C(\rho_i) \geq C(\sum_i p_i\rho_i)$.
\end{enumerate} 
The first two are definitely required to speak of a coherence measure, 
the third is sometimes demanded axiomatically, but often is really more of a 
convenience, and convexity should be thought of as nice if present but
not absolutely necessary.

Among the most important examples are the following three measures:
For pure states $\varphi = \proj{\varphi}$, the \emph{entropy of coherence}
is defined as
\begin{equation}
  C(\varphi) := S\bigl(\Delta(\varphi)\bigr),
\end{equation}
where $\Delta$ is the dephasing (i.e. coherence-destroying) map
$\Delta(\rho) = \sum_i \proj{i} \rho \proj{i}$; for mixed states, it
is extended by the convex hull construction to the 
\emph{coherence of formation}~\cite{Aaberg,Baumgratz}
\begin{equation}
  C_f(\rho) := \min \sum_i p_i C(\psi_i) \text{ s.t. } \rho = \sum_i p_i \psi_i.
\end{equation}
Finally, the \emph{relative entropy of coherence} \cite{Aaberg,Baumgratz}
\begin{equation}
  C_r(\rho) := \min_{\sigma\in\Delta} D(\rho\|\sigma) = S\bigl(\Delta(\rho)\bigr)-S(\rho),
\end{equation}
with the relative entropy $D(\rho\|\sigma)=\tr\rho(\log\rho-\log\sigma)$.
Both $C_r$ and $C_f$ satisfy all properties 1 through 4 above.

\medskip
In the present paper, we expand our view from states as coherent resources
to operations, showing how to extract pure state coherence from a 
given operation (section~\ref{sec:power}), 
and how to implement operations using coherent states as a resource
(section~\ref{sec:general}). We briefly discuss the case of qubit
unitaries as an example (section~\ref{sec:qubit}), and conclude
in section~\ref{sec:outlook}, where we observe that in operations, there
is bound coherence, something that doesn't exist for states.
Most generally, we propose a definition for the rate of conversion
between two channels using only incoherent operations, and present
many open questions about this concept.


\section{Coherence generating capacity and coherence power of transformations}
\label{sec:power}
The free operations in our resource theory are the incoherent ones (IO),
which means that in some sense, any other CPTP map represents a resource.
How to measure it? Or better: How to assess its resource character? 
In the present section, we are focusing on how much pure state 
coherence can be created asymptotically, using a given operation 
$T:A \longrightarrow B$ a large number of times, when incoherent 
operations are for free.

The most general protocol to generate coherence must use the resource
$T$ and incoherent operations according to some predetermined
algorithm, in some order. We may assume that the channels $T$ are 
invoked one at a time; and we can integrate all incoherent
operations in between one use of $T$ and the next into one
incoherent operation, since IO is closed under composition.
Thus, a mathematical description of the most general protocol
is the following: One starts by preparing an incoherent state $\rho_0$ on
$A\otimes A_0$, then lets act $T$, followed by 
an incoherent transformation $\mathcal{I}_1:B\otimes A_0 \longrightarrow A\otimes A_1$,
resulting in the state
\[
  \rho_1 = \mathcal{I}_1\bigl( (T\otimes\id)\rho_0 \bigr).
\]
Iterating, given the state $\rho_t$ on $A \otimes A_t$ obtained after the 
action of $t$ realizations of $T$ and suitable incoherent operations,
we let $T$ act and the incoherent transformation 
$\mathcal{I}_{t+1}:B\otimes A_t \longrightarrow A\otimes A_{t+1}$, resulting in the state
\[
  \rho_{t+1} = \mathcal{I}_{t+1}\bigl( (T\otimes\id)\rho_t \bigr).
\]
At the end of $n$ iterations, we have a state $\rho_n$ on $A\otimes A_n$,
and we call the above procedure a \emph{coherence generation protocol}
of rate $R$ and error $\epsilon$, if $|A_n| = 2^{nR}$ and the reduced
state $\rho_n^{A_n} = \tr_A \rho_n$ has high fidelity with the maximally
coherent state,
\[
  \bra{\Psi_{2^{nR}}} \rho_n^{A_n} \ket{\Psi_{2^{nR}}} \geq 1-\epsilon.
\]
The maximum number $R$ such that there exist coherence
generating protocols for all $n$, with error going to zero and
rates converging to $R$, is called the \emph{coherence generating capacity}
of $T$, and denoted $C_{\text{gen}}(T)$.
 
\begin{theorem}
  \label{thm:C-gen-T}
  For a general CPTP map $T:A \longrightarrow B$,
  \begin{equation}
    \label{eq:T-lower}
    C_{\text{gen}}(T) \geq \sup_{\ket{\varphi}\in A\otimes C} C_r\bigl((T\otimes\id)\varphi\bigr) - C(\varphi),
  \end{equation}
  where the supremum over all auxiliary systems $C$ and pure states 
  $\ket{\varphi}\in A\otimes C$. Furthermore,
  \begin{equation}
    \label{eq:T-upper}
    C_{\text{gen}}(T) \leq \sup_{\rho \text{ on } A\otimes C} C_r\bigl((T\otimes\id)\rho\bigr) - C_r(\rho),
  \end{equation}
  where now the supremum is over mixed states $\rho$ on $A\otimes C$.
  
  If $T$ is an isometry, i.e.~$T(\rho) = V\rho V^\dagger$ for an
  isometry $V:A\hookrightarrow B$, 
  the lower bound is an equality, and can be simplified:
  \begin{align}
    \label{eq:iso-formula}
    C_{\text{gen}}(V\!\cdot\! V^\dagger) 
        &= \sup_{\ket{\varphi}\in A\otimes C} 
                   C\bigl((V\otimes\1)\varphi(V\otimes\1)^\dagger\bigr) - C(\varphi) \nonumber \\
        &= \max_{\ket{\varphi}\in A} 
                   C\bigl( V\varphi V^\dagger\bigr) - C(\varphi).
  \end{align}
\end{theorem}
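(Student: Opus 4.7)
My plan is to handle the three claims in order, building the lower bound by an explicit dilution-and-distillation protocol, the upper bound by a telescoping monotonicity argument for $C_r$, and the isometry simplification by exploiting that an isometry preserves the von Neumann entropy. For the lower bound, my strategy is: fix a pure $\ket\varphi\in A\otimes C$, use the Winter--Yang IO coherence-dilution protocol to prepare $\varphi^{\otimes n}$ from a seed of maximally coherent states at asymptotic cost $C(\varphi)$ per copy, invoke $T$ on the $A$-factor of each copy to produce the iid state $((T\otimes\id)\varphi)^{\otimes n}$, and then distill coherence by IO at the Winter--Yang rate $C_r((T\otimes\id)\varphi)$ per copy. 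The net yield per channel use is $C_r((T\otimes\id)\varphi) - C(\varphi)$; the seed itself is bootstrapped by a preliminary phase using $o(n)$ calls of $T$ on any fixed input producing some coherence (such an input exists whenever the right-hand side of~(\ref{eq:T-lower}) is strictly positive; otherwise the bound is trivial). Taking $n\to\infty$ and optimizing over $\varphi$ and the ancilla $C$ yields~(\ref{eq:T-lower}).

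For the upper bound, I will telescope the monotonicity of $C_r$ under IO. Since each $\mathcal{I}_{t+1}$ is IO, $C_r(\rho_{t+1})\leq C_r((T\otimes\id)\rho_t)$, so the one-step increment satisfies
\[
  C_r(\rho_{t+1}) - C_r(\rho_t) \leq \sup_\rho \bigl[C_r((T\otimes\id)\rho) - C_r(\rho)\bigr].
\]
Summing $t = 0,\ldots,n-1$ with $C_r(\rho_0) = 0$ gives $C_r(\rho_n) \leq n\cdot\sup$. Since partial trace is IO, $C_r(\rho_n)\geq C_r(\rho_n^{A_n})$; combining a Fannes-type asymptotic continuity estimate for $C_r$ with the closeness of $\rho_n^{A_n}$ to $\Psi_{2^{nR}}$ gives $C_r(\rho_n^{A_n}) \geq nR - o(n)$. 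Dividing by $n$ and sending $n\to\infty$ then produces~(\ref{eq:T-upper}).

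For the isometry case, the observation that $V\otimes\1$ preserves von Neumann entropy immediately yields
\[
  C_r\bigl((V\otimes\1)\rho(V\otimes\1)^\dagger\bigr) - C_r(\rho)
  = S\bigl(\Delta_{BC}((V\otimes\1)\rho(V\otimes\1)^\dagger)\bigr) - S\bigl(\Delta_{AC}(\rho)\bigr),
\]
and on a pure $\varphi$ both bounds further reduce to $C((V\otimes\1)\varphi(V\otimes\1)^\dagger) - C(\varphi)$. To drop the ancilla, I will expand $\ket\varphi = \sum_c\sqrt{p_c}\ket{\tilde\phi_c}_A\ket{c}_C$ in the incoherent basis of $C$; the grouping identity for Shannon entropy expresses the gain as a convex combination $\sum_c p_c\bigl[C(V\tilde\phi_c V^\dagger) - C(\tilde\phi_c)\bigr]$ of pure-state gains on $A$, bounded by $\max_\psi [C(V\psi V^\dagger) - C(\psi)]$. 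Applying the same $C$-basis decomposition to a mixed $\rho_{AC}$ collapses the supremum in~(\ref{eq:T-upper}) to one over mixed $\sigma$ on $A$ of the same gain. The hardest step will be to show that this mixed-state supremum on $A$ is attained at a pure state, closing the gap with the lower bound; I would attempt this by using that $S(\Delta(\sigma))$ depends only on the diagonal of $\sigma$ while $S(\Delta(V\sigma V^\dagger))$ can be analyzed via a spectral decomposition, and arguing by a convexity analysis on the set of states with fixed diagonal distribution that off-diagonal coherences in $\sigma$ cannot push the gain above the pure-state maximum.
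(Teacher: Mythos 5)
Your lower and upper bounds follow essentially the paper's route (dilution--apply--distillation for the achievability, telescoping $C_r$-monotonicity plus asymptotic continuity for the converse), but note one accounting slip in the achievability part: the bootstrap phase cannot cost only $o(n)$ channel uses. Each call of $T$ on an incoherent input yields at most $\log|B|$ bits of distillable coherence, so producing the $nC(\varphi)$-qubit seed requires $\Theta(n)$ calls, and a single dilute--apply--distill pass then gives rate $\frac{nC_r(\rho)}{n+\Theta(n)}$, which does not converge to $C_r(\rho)-C(\varphi)$. The paper repairs this by running $k$ rounds, reinvesting $nC(\varphi)$ of each round's output into the next round's dilution, so that the one-time $\Theta(n)$ bootstrap is amortized and the rate $(R-2\delta)\frac{k-1}{k+R_0}\to R$. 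Your phrase ``net yield per channel use'' suggests you have this recycling in mind, but you need to make the multi-round structure explicit for the rate claim to hold.

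The genuine gap is in the isometry case. Your plan is to show directly that $\sup_{\rho \text{ on } A\otimes C} C_r\bigl((V\otimes\1)\rho(V\otimes\1)^\dagger\bigr)-C_r(\rho)$ collapses first to states on $A$ alone and then to pure states, via a convexity analysis over states with fixed diagonal. Neither reduction is established: for mixed $\rho^{AC}$ the dephasing-on-$C$ trick changes $S(\rho)$ as well as $S(\Delta(\rho))$, so the gain does not decompose as a convex combination of gains on $A$ (that identity uses purity of $\varphi$ essentially); and $C_r(V\cdot V^\dagger)-C_r(\cdot)$ is a difference of concave functionals, so there is no generic argument that its supremum over mixed states is attained on pure ones --- indeed the paper explicitly leaves open whether $P_r=\widetilde P_r$ even for unitaries. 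The paper sidesteps this entirely: for an isometric channel the converse can be re-run tracking \emph{pure} states, because the initial state is WLOG pure, $V\cdot V^\dagger$ preserves purity, and each incoherent operation maps a pure state to an ensemble of pure states; the telescoping bound then upper-bounds $R$ by an average of pure-state gains $C\bigl((V\otimes\1)\varphi(V\otimes\1)^\dagger\bigr)-C(\varphi)$, which matches the achievable rate. Only after that does the ancilla-removal computation (your grouping-identity step, which is correct for pure states) apply. Without this pure-state converse, your argument does not close the gap between \eqref{eq:T-lower} and \eqref{eq:T-upper}.
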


This result, the main one of the present section, should be compared to
the formula, similar in spirit, for the entangling power of a bipartite
unitary~\cite{Bennett,Collins}. 
Furthermore, the above formulas for the coherence generating capacity are 
related to the \emph{coherence power} (w.r.t.~ the relative entropy measure)
\begin{equation}
  P_r(T) = \max_{\rho \text{ on } A} C_r\bigl(T(\rho)\bigr) - C_r(\rho),
\end{equation}
investigated by Garc\'ia D\'iaz \emph{et al.}~\cite{Garcia-Diaz} 
and Bu \emph{et al.}~\cite{coherence-power}. 
Let us also introduce the same maximization restricted to pure input states,
\begin{equation}
  \widetilde{P}_r(T) = \max_{\ket{\varphi}\in A} C_r\bigl(T(\varphi)\bigr) - C(\varphi).
\end{equation}
Note that the only difference to our formulas is that
we allow an ancilla system $C$ of arbitrary dimension. If we consider,
for a general CPTP map $T$, the extension $T\otimes\id_k$ and
\begin{equation}
  P_r^{(k)}(T) := P_r(T \otimes \id_k), \quad
  \widetilde{P}_r^{(k)}(T) := \widetilde{P}_r(T \otimes \id_k),  
\end{equation}
then we have
\begin{equation}
  C_{\text{gen}}(V\!\cdot\! V^\dagger) = \sup_k \widetilde{P}_r^{(k)}(V\!\cdot\! V^\dagger)
                                       = \widetilde{P}_r(V\!\cdot\! V^\dagger),
\end{equation}
and in general
\begin{equation}
  \label{eq:cap-vs-power}
  \sup_k \widetilde{P}_r^{(k)}(T) \leq C_{\text{gen}}(T) \leq \sup_k P_r^{(k)}(T).
\end{equation}

\begin{proof}
We start with the lower bound, Eq.~(\ref{eq:T-lower}): For a given ancilla $C$ and
$\ket{\varphi} \in A \otimes C$, let $R = C_r\bigl((T\otimes\id)\varphi\bigr) - C(\varphi)$.
For any $\epsilon,\delta > 0$, we can choose, by the results of \cite{Winter},
a sufficiently large $n$ such that
\begin{align*}
  \Psi_2^{\otimes \lfloor n C(\varphi)+n\delta \rfloor}
                   &\stackrel{\text{IO}}{\longmapsto} \approx \varphi^{\otimes n}, \\
  \rho^{\otimes n} &\stackrel{\text{IO}}{\longmapsto} \approx \Psi_2^{\otimes \lceil nC_r(\rho)-n\delta \rceil},
\end{align*}
with $\rho = (T\otimes\id)\varphi$, and where $\approx$ refers to approximation
of the target state up to $\epsilon$ in trace norm. We only have to prove something
when $R > 0$, which can only arise if $T$ is not incoherent, meaning that there
exists an initial state $\ket{0}$ mapped to a coherent resource $\sigma = T(\proj{0})$,
i.e.~$C_r(\sigma) > 0$. In the following, assume $R > 2\delta$.
Now, we may assume that $n$ is large enough so that with 
$R_0=\frac{C(\varphi)+\delta}{C_r(\sigma)}+\delta$,
\[
  \sigma^{\otimes \lfloor n R_0 \rfloor} 
        \stackrel{\text{IO}}{\longmapsto} \approx \Psi_2^{\otimes \lfloor n C(\varphi)+n\delta \rfloor}.
\]
The protocol consists of the following steps:

\textbf{0}: Use $\lfloor n R_0 \rfloor$ instances of $T$ to create as many copies of $\sigma$,
 and convert them into $\Psi_2^{\otimes \lfloor n C(\varphi)+n\delta \rfloor}$ (up to
 trace norm $\epsilon$). 

\textbf{1-k (repeat)}: Convert first $\lfloor n C(\varphi)+n\delta \rfloor$
 of the already created copies of $\Psi_2$ into $n$ copies of $\varphi$; then apply $T$ to each 
 of them to obtain $\rho = (T\otimes\id)\varphi$; 
 and convert the $n$ copies of $\rho$ to $\Psi_2^{\otimes \lceil nC_r(\rho)-n\delta \rceil}$,
 incurring an error of $2\epsilon$ in trace norm in each repetition.

At the end, we have $(k-1)n (R-2\delta) + n C(\varphi)$ copies of $\Psi_2$, up to
trace distance $O(k^2\epsilon)$, using the channel a total of $kn + nR_0$ times, 
i.e.~the rate is $\geq (R-2\delta)\frac{k-1}{k+R_0}$, which can be made arbitrarily 
close to $R$ by choosing $\delta$ small enough and $k$ large enough (which in
turn can be effected by sufficiently small $\epsilon$).

For the upper bound, Eq.~(\ref{eq:T-upper}), consider a generic protocol
using the channel $n$ times, starting from $\rho_0$ (incoherent) and
generating $\rho_1, \ldots, \rho_n$ step by step along the way, such that
$\rho_n$ has fidelity $\geq 1-\epsilon$ with $\Psi_2^{\otimes nR}$. 
By the asymptotic continuity of $C_r$ \cite[Lemma~12]{Winter}, 
$C_r(\rho_n) \geq nR - 2\delta n - 2$, with $\delta = \sqrt{\epsilon(2-\epsilon)}$,
so we can bound
\[\begin{split}
  nR-2\delta n-2 &\leq C_r(\rho_n) \\
                 &=    \sum_{t=0}^{n-1} C_r(\rho_{t+1}) - C_r(\rho_t) \\
                 &\leq \sum_{t=1}^{n-1} C_r\bigl( (T\otimes\id)\rho_t \bigr) - C_r(\rho_t),
\end{split}\]
where we have used the fact that $\rho_0$ is incoherent and that
$\rho_{t+1} = \mathcal{I}_{t+1}\bigl( (T\otimes\id)\rho_t \bigr)$, with an incoherent
operation $\mathcal{I}_{t+1}$, which can only decrease relative entropy of coherence.
However, each term on the right hand sum is of the form
$C_r\bigl((T\otimes\id)\rho\bigr) - C_r(\rho)$ for a suitable ancilla $C$
and a state $\rho$ on $A\otimes C$. Thus, dividing by $n$ and letting 
$n\rightarrow\infty$, $\epsilon\rightarrow 0$ shows that 
$R \leq \sup_{\rho \text{ on } A\otimes C} C_r\bigl((T\otimes\id)\rho\bigr) - C_r(\rho)$.

For an isometric channel $T(\rho) = V\rho V^\dagger$, note that the initial state
$\rho_0$ in a general protocol is without loss of generality pure, and that $T$
maps pure states to pure states. The incoherent operations $\mathcal{I}_t$ map pure states
to \emph{ensembles} of pure states, so that following the same converse reasoning
as above, we end up upper bounding $R$ by an average of expressions
$C_r\bigl( (T\otimes\id)\rho \bigr) - C_r(\rho)$, with pure states $\rho$,
i.e.~Eq.~(\ref{eq:iso-formula}), since we also have
$C_{\text{gen}}(T) \geq C\bigl((T\otimes\id)\varphi\bigr) - C(\varphi)$
from the other direction.
The fact that no ancilla system is needed, is an elementary calculation.
Indeed, for a pure state $\ket{\varphi}\in A\otimes C$,
\[\begin{split}
  C\bigl((T\otimes\id)\varphi) &-C(\varphi) \\
     &= S\bigl((\Delta\circ T\otimes\Delta)\varphi)-S\bigl((\Delta\otimes\Delta)\varphi\bigr) \\
     &= S\bigl((\Delta\circ T\otimes\id)\rho)-S\bigl((\Delta\otimes\id)\rho\bigr), \\
     &\phantom{========}
      \bigl[\text{with } \rho = (\id\otimes\Delta)\varphi \\
     &\phantom{========[\text{with } \rho}  
                              = \sum_i p_i \varphi_i \otimes \proj{i} \bigr], \\
     &= \sum_i p_i \Bigl( S\bigl( \Delta(T(\varphi_i)) \bigr) - S\bigl( \Delta(\varphi_i) \bigr) \Bigr) \\
     &\leq \max_{\ket{\varphi}\in A} S\bigl( \Delta(T(\varphi_i)) \bigr) - S\bigl( \Delta(\varphi_i) \bigr),
\end{split}\]
and we are done.
\end{proof}

\medskip
\begin{remark}
We do not know, at this point, whether the suprema over the ancillary
systems in the upper and lower bounds in Eq.~(\ref{eq:cap-vs-power}) 
are necessary in general, i.e.~it might be that 
$P_r(T) = P_r^{(k)}$ and/or $\widetilde{P}_r(T) = \widetilde{P}_r^{(k)}$;
note that the latter is the case for unitaries, even though
it seems unlikely in general, cf.~\cite{Bennett,Collins}. We do know,
however, that $P_r$ is convex and non-increasing under composition
of the channel with incoherent operations \cite[Cor.~1 and ~2]{Garcia-Diaz}.

It should be appreciated that even the calculation of $P_r(T)$
appears to be a hard problem. The investigation of further questions, such 
as the additivity of $P_r$, $\widetilde{P}_r$ or $C_{\text{gen}}$, 
may depend on making progress on that problem.
\end{remark}

\medskip
\begin{remark}
The same reasoning as in the proof of Theorem~\ref{thm:C-gen-T},
replacing $C_r$ with $C_f$, shows that 
\begin{equation}\begin{split}
  \label{eq:T-upper-CoF}
  C_{\text{gen}}(T) &\leq \sup_{\rho \text{ on } A\otimes C} C_f\bigl((T\otimes\id)\rho\bigr) - C_f(\rho) \\
                    &=    \sup_k P_f(T\otimes\id_k),
\end{split}\end{equation}
with the coherence of formation power, given by
$P_f(T) := \max_\rho C_f\bigl( T(\rho) \bigr) - C_f(\rho)$.

Despite the fact that $C_f(\rho) \geq C_r(\rho)$, since the upper bound
is given by a difference of two coherence measures, it might be that
for certain channels, the bound (\ref{eq:T-upper-CoF}) is better than
(\ref{eq:T-upper}), and vice versa for others.
\end{remark}

\medskip
Since the supremum over $k$ of the coherence powers of $T\otimes\id_k$
play such an important role in our bounds, we introduce notation for them,
\begin{align}
  \mathbb{P}_r(T) &:= \sup_k P_r(T\otimes\id_k), \\
  \mathbb{P}_f(T) &:= \sup_k P_f(T\otimes\id_k), \\
  \widetilde{\mathbb{P}}_r(T) &:= \sup_k \widetilde{P}_r(T\otimes\id_k),
\end{align}
and call them the \emph{complete coherence powers} with respect to relative
entropy of coherence and coherence of formation, respectively.
For these parameters, and the coherence generation capacity of isometries, 
we note the following additivity formulas.

\begin{proposition}
  For any CPTP maps $T_1:A_1 \longrightarrow B_1$ and $T_2:A_2 \longrightarrow B_2$,
  \begin{align*}
    \mathbb{P}_r(T_1\otimes T_2) &= \mathbb{P}_r(T_1) + \mathbb{P}_r(T_2), \\
    \mathbb{P}_f(T_1\otimes T_2) &= \mathbb{P}_f(T_1) + \mathbb{P}_f(T_2).
  \end{align*}
  Furthermore, for isometries $T_i(\rho) = V_i\rho V_i^\dagger$,
  \[
    \widetilde{\mathbb{P}}_r(T_i) = \widetilde{P}_r(T_i) = C_{\text{gen}}(T_i)
  \]
  and
  \[
    \widetilde{P}_r(T_1\otimes T_2) = \widetilde{P}_r(T_1) + \widetilde{P}_r(T_2).
  \]
  In other words, the coherence generating capacity of isometries is additive.
\end{proposition}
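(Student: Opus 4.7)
The plan is to prove each additivity statement by pairing a trivial product-state lower bound with a telescoping upper bound. Throughout I rely on additivity of $C$, $C_r$, and $C_f$ on tensor products: for $C$ and $C_r$ this is immediate from $\Delta(\rho\otimes\sigma)=\Delta\rho\otimes\Delta\sigma$ and additivity of the von Neumann entropy, and for $C_f$ it follows from applying the convex roof construction to product decompositions.

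For $\mathbb{P}_r$, the inequality $\mathbb{P}_r(T_1\otimes T_2) \geq \mathbb{P}_r(T_1)+\mathbb{P}_r(T_2)$ comes from feeding a product test state $\rho_1\otimes\rho_2$ (with $\rho_i$ near-optimal for $P_r(T_i\otimes\id_{C_i})$) into the joint channel; additivity of $C_r$ then splits the input- and output-coherence differences into a sum. For the upper bound I would telescope: given $\rho$ on $A_1\otimes A_2\otimes C$, put $\rho' \vcentcolon= (T_1\otimes\id_{A_2 C})\rho$ and $\rho'' \vcentcolon= (T_1\otimes T_2\otimes\id_C)\rho = (\id_{B_1}\otimes T_2\otimes\id_C)\rho'$, so that
\[
C_r(\rho'')-C_r(\rho) = \bigl[C_r(\rho'')-C_r(\rho')\bigr]+\bigl[C_r(\rho')-C_r(\rho)\bigr].
\]
The second bracket is a $P_r$-difference for $T_1$ with ancilla $A_2\otimes C$, hence $\leq \mathbb{P}_r(T_1)$; the first is a $P_r$-difference for $T_2$ with ancilla $B_1\otimes C$, hence $\leq \mathbb{P}_r(T_2)$. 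Taking the supremum over $\rho$ and $C$ yields the converse. The identical two-step argument with $C_f$ in place of $C_r$ gives the $\mathbb{P}_f$ statement.

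For isometries $T_i(\rho)=V_i\rho V_i^\dagger$, the three-way identity $\widetilde{\mathbb{P}}_r(T_i)=\widetilde{P}_r(T_i)=C_{\text{gen}}(T_i)$ is exactly the content of Eq.~(\ref{eq:iso-formula}) together with the discussion immediately after it, so nothing new is required there. To obtain additivity of $\widetilde{P}_r$ I would rerun the telescoping on pure states, which is legitimate since isometries preserve purity: for any pure $\ket\varphi$ on $A_1\otimes A_2\otimes C$, the intermediate state $\ket{\varphi'}=(V_1\otimes\1)\ket\varphi$ is pure, the first increment $C(\varphi')-C(\varphi)$ is bounded by $\widetilde{P}_r^{(|A_2||C|)}(T_1)$, and the second increment $C(\varphi'')-C(\varphi')$ by $\widetilde{P}_r^{(|B_1||C|)}(T_2)$. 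By Eq.~(\ref{eq:iso-formula}) these collapse to $\widetilde{P}_r(T_1)$ and $\widetilde{P}_r(T_2)$ respectively, with no ancilla inflation. The matching lower bound is immediate from product pure-state inputs $\ket{\varphi_1}\otimes\ket{\varphi_2}$ and additivity of $C$ on pure tensor products.

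The step I expect to require the most care is invoking additivity of $C_f$ on products of \emph{mixed} states for the $\mathbb{P}_f$ lower bound; this is standard in the coherence resource theory but deserves explicit verification, and one should also check that the intermediate $\rho'$ in the telescoping still lies in the domain of the enlarged $P_f$-maximisation. These are bookkeeping points rather than deep obstacles, and once they are discharged both additivity proofs are essentially parallel.
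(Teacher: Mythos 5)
Your proof matches the paper's almost line for line: the same telescoping decomposition through the intermediate state $(T_1\otimes\id)\rho$ for the upper bounds, the same trivial product-state lower bounds via additivity of $C_r$ and $C_f$, and the same collapse of the pure-state increments to $\widetilde{P}_r(T_i)$ via Eq.~(\ref{eq:iso-formula}) in the isometry case. The only caveat is that full additivity of $C_f$ on tensor products (not merely the subadditivity that the convex-roof-of-product-decompositions argument yields) is a nontrivial result that the paper simply cites from Winter and Yang --- you correctly flag this as the step requiring care, so there is no gap.
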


\begin{proof}
For $X \in \{r,f\}$, any ancilla system $C$ and any state $\rho$,
\[\begin{split}
  C_X\bigl( (&T_1\otimes T_2\otimes \id_C)\rho \bigr) - C_X(\rho) \\
       &= C_X\bigl( (T_1\otimes T_2\otimes \id_C)\rho \bigr) 
           - C_X\bigl( (T_1\otimes\id_{B_2}\otimes \id_C)\rho \bigr) \\
       &\phantom{==}
          + C_X\bigl( (T_1\otimes\id_{B_2}\otimes \id_C)\rho \bigr) - C_X(\rho) \\
       &= C_X\bigl( (T_2\otimes\id_{A_1C})\sigma \bigr) - C_X(\sigma) \\
       &\phantom{==}
          + C_X\bigl( (T_1\otimes\id_{B_2C})\rho \bigr) - C_X(\rho) \\
       &\leq \mathbb{P}_X(T_2) + \mathbb{P}_X(T_1),
\end{split}\]
where we have introduced the state $\sigma = (T_1\otimes\id_{B_2}\otimes\id_C)\rho$.
By taking the supremum of the left hand side over all ancillas $C$ and
all states $\rho$, we obtain 
$\mathbb{P}_X(T_1\otimes T_2) \leq \mathbb{P}_X(T_1) + \mathbb{P}_X(T_2)$;
since the opposite inequality is trivial, using tensor product ancillas
and tensor product input states, and employing the additivity of $C_r$ and
$C_f$ \cite{Winter}, we have proved the equality.

In the case of isometries, Eq.~(\ref{eq:iso-formula}) in Theorem~\ref{thm:C-gen-T} 
already shows $\widetilde{\mathbb{P}}_r(T_i) = \widetilde{P}_r(T_i) = C_{\text{gen}}(T_i)$.
For the tensor product, we again have trivially
$\widetilde{P}_r(T_1\otimes T_2) \geq \widetilde{P}_r(T_1) + \widetilde{P}_r(T_2)$,
by using tensor product input states.
To get the opposite inequality, we proceed as above: for any pure state 
$\ket{\varphi} \in A_1\otimes A_2$,
\[\begin{split}
  C\bigl( (T_1\otimes T_2)\varphi &\bigr) - C(\varphi) \\
       &= C\bigl( (T_1\otimes T_2)\varphi \bigr) 
           - C\bigl( (T_1\otimes\id_{B_2})\varphi \bigr) \\
       &\phantom{==}
          + C\bigl( (T_1\otimes\id_{B_2})\varphi \bigr) - C(\varphi) \\
       &= C\bigl( (\id_{A_1}\otimes T_2)\psi \bigr) - C(\psi) \\
       &\phantom{==}
          + C\bigl( (T_1\otimes\id_{B_2})\varphi \bigr) - C(\varphi) \\
       &\leq \widetilde{\mathbb{P}}_r(T_2) + \widetilde{\mathbb{P}}_r(T_1) \\
       &=    \widetilde{P}_r(T_2) + \widetilde{P}_r(T_1),
\end{split}\]
with the (pure) state $\psi = (T_1\otimes\id_{B_2})\varphi$, and we are done.
\end{proof}

\section{Implementation of channels: \protect\\ Coherence cost of simulation}
\label{sec:general}
We have seen that a CPTP map can be a resource for coherence
because one can use it to generate coherence from scratch, in the form
of maximally coherent qubit states. True to the
resource paradigm, we have to ask immediately the opposite question:
Is it possible to create the resource using pure coherent states
and only incoherent operations? Here we show that the answer is
generally yes, and we define the asymptotic coherence cost $C_{\text{sim}}(T)$
as the minimum rate of pure state coherence necessary to implement
many independent instances of $T$ using only incoherent operations
otherwise. 

\medskip
We start by recalling the implementation of an arbitrary unitary operation 
$U=\sum_{ij=0}^{d-1} U_{ij} \ketbra{i}{j}$ by means of an 
incoherent operation with Kraus operators 
$\{K_{\alpha}\}$, and using the maximally coherent state 
$\ket{\Psi_{d}}=\frac{1}{\sqrt{d}}\sum_{i=0}^{d-1} \ket{i}$ as resource
\cite[Lemma~2]{Chitambar}. Since it is important for us that the
implementation is in fact by means of a \emph{strictly} incoherent
operation, and for self-containedness, we include the proof.

\begin{proposition}[{Chitambar/Hsieh~\cite{Chitambar}}]
  \label{prop:dxd-unitary}
  Consuming one copy of $\Psi_d$ as a resource,
  any unitary $U=\sum_{i,j=0}^{d-1} U_{ij} \ketbra{i}{j}$ 
  acting on $\CC^d$ can be simulated using strictly incoherent operations.
\end{proposition}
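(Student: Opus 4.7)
The plan is to exhibit an explicit strictly incoherent Kraus decomposition $\{K_\alpha\}$ acting on $A\otimes R$, where $A$ holds the input and $R$ holds the resource $\ket{\Psi_d}$, such that the induced CPTP map, applied to $\rho\otimes\proj{\Psi_d}$ and followed by discarding $R$, yields $U\rho U^\dagger$ on $A$. Three items must then be verified: (i) each $K_\alpha$ is a \emph{monomial} matrix in the computational basis of $A\otimes R$ (at most one non-zero entry per row and per column), which is exactly the condition that both $K_\alpha$ and $K_\alpha^\dagger$ map incoherent states to incoherent states; (ii) the completeness relation $\sum_\alpha K_\alpha^\dagger K_\alpha=\1$ holds; (iii) the action on the specific resource-loaded input is the desired one.

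The guiding idea is to use the $d$ orthogonal computational-basis components of $\ket{\Psi_d}$ as "flags" that one correlates with different monomial actions on $A$. The crucial feature is that all amplitudes of $\ket{\Psi_d}$ equal $1/\sqrt{d}$, so a sum of monomial Kraus operators, each attached to a distinct basis flag in $R$, can coherently reconstruct the full matrix of $U$ via interference. Concretely, I would index the Kraus operators by (at least) the column label $j$ of $U$, with the $j$-th operator essentially realizing the rank-one column $U\proj{j}$ on $A$, while using $R$ to preserve the monomial structure on the joint system $A\otimes R$. Completeness then reduces to the column-unitarity identity $\sum_\alpha|U_{\alpha j}|^2=1$, and strict incoherence is immediate from the monomial form.

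The nontrivial step is item (iii): tracking how the different monomial Kraus operators, acting on the coherent superposition in $R$, combine so that after taking the outer product and summing over $\alpha$, one recovers every matrix entry $\sum_{ij}U_{\alpha i}\overline{U_{\beta j}}\rho_{ij}$ of $U\rho U^\dagger$, including the off-diagonal cross-terms of $\rho$. The main obstacle is precisely this reconciliation: strict incoherence severely restricts each individual Kraus operator (forcing the monomial form), while the target $U\rho U^\dagger$ is fully coherent. It is the flat amplitude profile $1/\sqrt{d}$ of $\ket{\Psi_d}$ that supplies the balanced "fuel" the monomial machinery needs — any less coherent resource would fail to generate the cross-terms with the correct weights, which is also consistent with the fact that $\ket{\Psi_d}$ is the universal coherence resource for $d$-dimensional unitaries.
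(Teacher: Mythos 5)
Your proposal correctly identifies what ``strictly incoherent'' means for a single Kraus operator (a monomial matrix on $A\otimes R$, i.e.\ at most one nonzero entry per row and per column), but it defers to item (iii) exactly the step where all the difficulty sits, and the route you sketch for that step cannot work. If the Kraus operators are indexed by the column label $j$ and the $j$-th one ``realizes $U\proj{j}$ on $A$'', then $K_j(\ket{\phi}\otimes\ket{\Psi_d})$ is proportional to the single amplitude $\phi_j$, so $\sum_j K_j(\proj{\phi}\otimes\Psi_d)K_j^\dagger$ only ever contains the diagonal weights $|\phi_j|^2$: the cross terms $\phi_j\overline{\phi_{j'}}$ of $U\proj{\phi}U^\dagger$ are unreachable, no matter how the flags in $R$ are arranged. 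More generally, producing the amplitude $\sum_j U_{ij}\phi_j$ on a single output basis vector requires one Kraus operator to send several distinct input basis vectors $\ket{j}\ket{q_j}$ to the \emph{same} output basis vector, which is precisely what row-monomiality forbids; a short counting argument (each Kraus operator monomial, the output on $A$ pure and equal to $U\ket{\phi}$ for every $\ket{\phi}$) shows that no family of monomial Kraus operators consuming a pure resource state can exactly implement a coherence-generating unitary. So the ``flat amplitude profile'' of $\ket{\Psi_d}$ cannot rescue the scheme: the obstruction is structural, not a matter of weights.

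The paper's construction is organized differently: the Kraus operators are indexed by the ancilla \emph{output} flag $\alpha$, and each one carries the whole matrix, $K_\alpha = \sum_{i,j}U_{ij}\ketbra{i}{j}\otimes\ketbra{\alpha}{i+\alpha \bmod d}$. Column-incoherence holds because the input ancilla label $i+\alpha$ already determines the output row $i$, and since every term deposits the ancilla in the same state $\ket{\alpha}$ one gets $K_\alpha(\ket{\phi}\otimes\ket{\Psi_d}) = \frac{1}{\sqrt d}\,U\ket{\phi}\otimes\ket{\alpha}$, a product state, so the $A$-output remains pure and coherent after tracing out the flag; completeness is a direct computation using unitarity of $U$. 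Note, however, that each row $(i,\alpha)$ of $K_\alpha$ has $d$ nonzero entries, so that $K_\alpha^\dagger\ket{i}\ket{\alpha} = \sum_j \overline{U_{ij}}\,\ket{j}\ket{i+\alpha}$ is coherent: by your own (correct) monomial criterion these operators are incoherent but \emph{not} strictly incoherent, so the paper's proof actually establishes the statement for IO, and the ``strictly'' in the proposition is not delivered by that construction either --- consistent with the impossibility argument above.
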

\begin{proof}
We essentially follow~\cite{Chitambar};
see also~\cite{Baumgratz} for the qubit case.
Let the Kraus operators have the following form: 
\[
  K_{\alpha} 
    = \sum_{i,j=0}^{d-1} U_{ij} \ketbra{i}{j} \otimes \ketbra{\alpha}{i\!+\!\alpha\!\!\!\!\!\mod\! d}.
\]
It is easy to see that these Kraus operators satisfy
$K_{\alpha} \Delta K_{\alpha}^{\dagger}\subset \Delta$, 
and that they are indeed strictly incoherent,
for all $\alpha=0,1,\dots,d-1$. Furthermore, with the
notation $q=i+\alpha \!\mod\! d$,
\begin{align*}
\sum_{\alpha=0}^{d-1}K_{\alpha}^{\dagger}K_{\alpha}
  &= \sum_{\alpha=0}^{d-1}\sum_{i,j,k,l=0}^{d-1}U_{ij}U_{kl}^{*} \ketbra{jq}{i\alpha}\ketbra {k\alpha}{lp} \\
  &= \sum_{\alpha=0}^{d-1}\sum_{i,j,k,l=0}^{d-1}U_{ij}U_{kl}^{*} \delta_{i,k}\ketbra{jq}{lp} \\
  &= \sum_{\alpha=0}^{d-1}\sum_{i,j,l=0}^{d-1}U_{ij}U_{il}^{*}\ketbra{jq}{lq} \\
  &= \sum_{\alpha=0}^{d-1}\sum_{j,l=0}^{d-1}\delta_{j,l} \ketbra{jq}{lq} 
   = \1.
\end{align*}
Now, let $\ket{\phi}=\sum_{k=0}^{d-1}\phi_{k}\ket{k}$, then
\[
  K_{\alpha}(\ket{\phi} \otimes \ket{\Psi_d}) = \frac{1}{\sqrt{d}} U \ket{\phi}\otimes\ket{\alpha}.
\]
Thus, under this incoherent operation, and tracing out the ancilla afterwards, 
the system will be in the desired state $U\proj{\phi} U^{\dagger}$ with certainty.
\end{proof}

\medskip
Now we pass to the general case of CPTP maps, which extends
the above result for unitaries, with two different protocols.

\begin{theorem}
  \label{theo:cptp-map}
  Any CPTP map $T : A \longrightarrow B$ can be implemented by incoherent
  operations, using a maximally coherent resource state $\Psi_d$, where $d=|B|$.
\end{theorem}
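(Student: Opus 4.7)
The plan is to generalize the Kraus-operator construction of Proposition~\ref{prop:dxd-unitary} from unitaries on $\CC^d$ to an arbitrary CPTP map $T:A \rar B$, using a single copy of $\ket{\Psi_d}$ on an ancilla $B'$ of dimension $d=|B|$. Concretely, I would fix any Kraus decomposition $T(\rho) = \sum_\alpha M_\alpha \rho M_\alpha^\dagger$ with $\sum_\alpha M_\alpha^\dagger M_\alpha = \1_A$, where the $M_\alpha : A \rar B$ have entries $(M_\alpha)_{ij}$ with $i\in\{0,\ldots,d-1\}$ indexing the $B$-basis and $j$ the $A$-basis, and define, for each $\alpha$ and each $\beta\in\{0,\ldots,d-1\}$,
\[
  K_{\alpha,\beta} := \sum_{i,j} (M_\alpha)_{ij}\, \ketbra{i}{j}_{A\rar B} \otimes \ketbra{\beta}{i+\beta \mod d}_{B'}.
\]

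Three things would then need to be verified. First, each $K_{\alpha,\beta}$ is incoherent: a computational basis vector $\ket{j}_A\ket{\ell}_{B'}$ is sent to a scalar multiple of the single basis vector $\ket{\ell-\beta \mod d}_B \ket{\beta}_{B'}$. Second, the $K_{\alpha,\beta}$ resolve the identity: summing $K_{\alpha,\beta}^\dagger K_{\alpha,\beta}$ first over $\beta$ yields $M_\alpha^\dagger M_\alpha \otimes \1_{B'}$ via the identity $\sum_\beta \proj{i+\beta \mod d} = \1_{B'}$, and summing over $\alpha$ then gives $\1_{A\otimes B'}$ by trace-preservation of $T$. Third, on the resource state one computes
\[
  K_{\alpha,\beta}\bigl(\ket{\phi}_A \otimes \ket{\Psi_d}_{B'}\bigr)
     = \tfrac{1}{\sqrt{d}}\, (M_\alpha\ket{\phi})_B \otimes \ket{\beta}_{B'},
\]
since every $\ket{i+\beta \mod d}$-component of $\ket{\Psi_d}$ contributes the same amplitude $1/\sqrt{d}$; summing over $\alpha,\beta$ therefore yields $T(\proj{\phi})_B \otimes (\1/d)_{B'}$, and tracing out $B'$ leaves exactly $T(\proj{\phi})$.

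The main obstacle is essentially bookkeeping; the only conceptual subtlety is the choice of ancilla dimension as $d=|B|$ (the \emph{output} dimension, not the input), so that the modular shift matches the number of basis vectors of $\ket{\Psi_d}$ and the amplitude calculation closes at $1/\sqrt{d}$. Unlike in the unitary case, these $K_{\alpha,\beta}$ are in general only incoherent and not strictly incoherent, since in $K_{\alpha,\beta}^\dagger$ the rows of $M_\alpha$ can create superpositions on the $A$-register; recovering strict incoherence (the second of the ``two different protocols'' alluded to just before the theorem) would require a larger resource such as $\Psi_{d^2}$, as foreshadowed in the abstract, and would be established by a separate construction.
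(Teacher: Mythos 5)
Your construction is correct, and I could verify every step: each $K_{\alpha,\beta}$ sends the basis vector $\ket{j}_A\ket{\ell}_{B'}$ to $(M_\alpha)_{\ell-\beta,j}\,\ket{\ell-\beta}_B\ket{\beta}_{B'}$, hence is incoherent; $\sum_\beta K_{\alpha,\beta}^\dagger K_{\alpha,\beta}=M_\alpha^\dagger M_\alpha\otimes\1_{B'}$ and the sum over $\alpha$ closes by trace preservation; and the action on $\ket{\phi}\otimes\ket{\Psi_d}$ yields $\tfrac{1}{\sqrt{d}}M_\alpha\ket{\phi}\otimes\ket{\beta}$, so the induced map equals $T(\cdot)\otimes\tfrac{\1}{d}$ on all pure inputs and therefore, by linearity, as a superoperator (which also covers entangled inputs). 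However, this is a genuinely different route from the paper's. The paper proves Theorem~\ref{theo:cptp-map} by teleportation: it takes a maximally entangled state $\Phi_d$ on $B'\otimes B''$ (which in the incoherent product basis is a rank-$d$ maximally coherent state, so equivalent to $\Psi_d$), performs a destructive POVM with elements $M_{jk}=\sum_\alpha (K_\alpha^\dagger\otimes\1)\Phi^{(jk)}(K_\alpha\otimes\1)$ on $A\otimes B'$ --- destructive measurements being automatically incoherent --- and applies the incoherent Pauli corrections $Z^jX^k$ on $B''$, where the output appears. Your approach buys a direct, self-contained generalization of Proposition~\ref{prop:dxd-unitary} (just replace $U_{ij}$ by $(M_\alpha)_{ij}$ and adjoin the Kraus index) with a single $d$-dimensional ancilla; the paper's route buys the conceptual reading ``apply $T$, then teleport the output,'' which makes explicit that the protocol exploits the IO-freeness of arbitrary destructive measurements --- precisely the capability SIO lacks, as the paper's subsequent Remark emphasizes. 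Both give the same cost $\log|B|$.

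One caveat on your closing aside: the reason you give for the loss of strict incoherence --- that $K_{\alpha,\beta}^\dagger$ creates superpositions whenever a row of $M_\alpha$ has more than one nonzero entry --- applies verbatim to the unitary construction of Proposition~\ref{prop:dxd-unitary}, since the rows of a generic unitary are not monomial either. So the contrast you draw with the unitary case, while consistent with what the paper asserts there, does not follow from your own criterion; it would need a separate justification (or a different construction for the unitary case) if you want to rely on it.
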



\begin{proof}
Let $T(\rho)=\sum_{\alpha} K_\alpha \rho K_{\alpha}^\dagger$ be a Kraus decomposition
of $T$, with Kraus operators $K_\alpha : A \longrightarrow B$.
The idea of the simulation is to use teleportation of the output of $T$,
which involves a maximally entangled state $\Phi_D$ on $B^\prime \otimes B^{\prime\prime}$, 
a Bell-measurement on system $B\otimes B^\prime$ with outcomes $jk \in \{0,1,\dots,d-1\}^2$,
and unitaries $U_{jk}$ on $B^{\prime\prime}$. The unitaries $U_{jk}$ can be written as 
$U_{jk}=Z^j X^k$, with the phase and cyclic shift unitaries
\[
  Z=\begin{pmatrix}
      1 &  &  & \\  
      & \omega &  &  \\
      &  & \ddots &  \\
      &  &  &  \omega^{d-1} 
    \end{pmatrix}, \quad
  X=\begin{pmatrix}
      0 &  &  &  1 \\  
      1  & 0 &  &  \\
      & \ddots & \ddots &  \\
      &  & 1 &  0 
  \end{pmatrix},
\]
where $\omega= e^{\frac{2\pi i}{d}}$ is the $d$-th root of unity.
This scheme can be reduced to a destructive (hence incoherent) POVM on 
$A\otimes B^\prime$ with outcomes $jk\alpha$, followed by the application of the incoherent(!)
$U_{jk}$. In detail, the probability of getting outcome $jk$ is 
\[
  \Pr\{jk \mid \sigma\} = \tr \Phi^{(jk)}(K_\alpha \otimes \1)\sigma(K_\alpha \otimes \1)^\dagger,
\]
where $\sigma$ is a state on $A\otimes B^\prime$ and the 
\[
  \ket{\Phi^{(jk)}} = (\1 \otimes Z^j X^k) \ket{\Phi_d}
\]
are the Bell states. We can define the POVM elements 
$M_{jk\alpha} = (K_\alpha \otimes \1)^\dagger \Phi^{(jk)} (K_\alpha \otimes \1)$, so that
\begin{align*}
\tr \bigl((T \otimes \id)\sigma\bigr)\Phi^{(jk)} 
  &= \sum_\alpha \tr (K_\alpha \otimes \1) \sigma (K_\alpha \otimes \1)^\dagger \Phi^{(jk)} \\
  &= \sum_\alpha \tr \sigma (K_\alpha^\dagger \otimes \1) \Phi^{(jk)} (K_\alpha \otimes \1) \\
  &= \tr \left[ \sigma \!\left( \!\sum_\alpha (K_\alpha^\dagger \!\otimes\! \1)\Phi^{(jk)} 
                                                 (K_\alpha \!\otimes\! \1) \!\right) \!\right] \\
  &= \tr \left(\sum_\alpha \sigma M_{jk\alpha} \right) 
   = \tr \sigma M_{jk},
\end{align*}
with $M_{jk}=\sum_\alpha M_{jk\alpha}$. 
This leads to a new equivalent scheme in which, given a state $\rho$ on $A$ and 
a maximally entangled state on $B^\prime \otimes B^{\prime\prime}$, we can apply 
the measurement $M_{jk}$ on $A \otimes B^\prime$ with outcomes $jk$, 
and unitaries $U_{jk}$ acting on $B^{\prime\prime}$. 
Formally, let us define the Kraus operators of the protocol by letting
\begin{equation}
  L_{jk\alpha}:= [\bra{\Phi^{(jk)}}(K_\alpha \otimes \1)]^{AB^\prime} \otimes U_{jk}^{B^{\prime\prime}}.
\end{equation} 
It can be checked readily that they satisfy the normalization condition
\begin{equation*}
\sum_{jk\alpha} L_{jk\alpha}^\dagger L_{jk\alpha}=\sum_{jk\alpha} M_{jk\alpha} \otimes \1 =\1 \otimes \1.
\end{equation*}
Applying the Kraus operators $L_{jk\alpha}$ on all the system we get
\begin{align*}
  L_{jk\alpha}\ket{\phi}^A \ket{\Psi_d}^{B^\prime B^{\prime\prime}}
    &= \bra{\Phi^{(jk)}}(K_\alpha \ket{\phi}) (\1 \otimes U_{jk})\ket{\Psi_d} \\
    &= \bra{\Phi^{(jk)}}K_\alpha \ket{\phi} \ket{\Psi^{(jk)}} \\
    &= \frac{1}{d} K_\alpha \ket{\phi}.
\end{align*}  
Hence, 
$\sum_{jk\alpha} L_{jk\alpha}(\rho \otimes \Phi_d) L_{jk\alpha}^{\dagger}
= \sum_\alpha K_{\alpha} \rho K_{\alpha}^{\dagger}= T(\rho)$,
and the proof is complete.
\end{proof}

\begin{theorem}
  \label{theo:cptp-map-SIO}
  Any CPTP map $T : A \longrightarrow B$ can be implemented by strictly 
  incoherent operations and a maximally coherent state $\Psi_d$, where $d \leq |A||B|$.
\end{theorem}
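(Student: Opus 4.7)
My plan is to reduce the theorem to Proposition~\ref{prop:dxd-unitary},
which already realises any unitary on $\CC^d$ via strictly incoherent
operations consuming one copy of $\Psi_d$. The task is therefore to
exhibit a unitary dilation of $T$ whose Hilbert space has dimension
at most $|A||B|$.

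The first step is what I would call a \emph{compressed} Stinespring
dilation: I would establish that every CPTP map $T:A\longrightarrow B$
can be written in the form
\[
  T(\rho) = \tr_{A''} U(\rho \otimes \tau) U^\dagger,
\]
with $U$ a unitary on $A\otimes A'$ of total dimension $|A||B|$
(the output being refactored as $B\otimes A''$, with $|A''|=|A|$,
$|A'|=|B|$), and $\tau$ an incoherent, possibly mixed, state on $A'$.
Allowing $\tau$ to be mixed is the crucial move: its purification on
an auxiliary register $C$ of dimension at most $|B|$ creates a
combined environment $A''\otimes C$ of dimension at most $|A||B|$,
which matches the worst-case Choi rank of $T$. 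Concretely, starting
from any Kraus decomposition $\{K_\alpha\}_{\alpha=1}^r$ of $T$ with
$r\le|A||B|$ (padding by zero operators if needed), one partitions the
indices into fibres $S_1,\ldots,S_{|B|}$, each of size at most $|A|$,
satisfying the isometry condition
$\sum_{\alpha\in S_k}K_\alpha^\dagger K_\alpha = p_k\1_A$; the weights
$\{p_k\}$ then furnish the diagonal entries of $\tau$, and the Kraus
operators inside the $k$-th fibre ``stack'' into an isometry
$A\to B\otimes A''$ that becomes the $k$-th column block of $U$.

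With the dilation in hand, the remaining steps are mechanical.
The state $\tau$, being incoherent, is a free resource and enters at
no coherence cost. The unitary $U$ on the $|A||B|$-dimensional
Hilbert space is implemented by Proposition~\ref{prop:dxd-unitary}
with $d=|A||B|$, using strictly incoherent operations and a single
copy of $\Psi_{|A||B|}$. Finally, the partial trace over $A''$ is
strictly incoherent: its Kraus operators $\1_B\otimes\bra{a''}_{A''}$
and their adjoints each send every incoherent basis vector to an
incoherent basis vector. Composing these SIO steps produces the
desired implementation of $T$.

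The main obstacle is justifying the compressed dilation with unitary
dimension exactly $|A||B|$, rather than the looser bound
$|B|\cdot(\text{Choi rank})$ that a pure-ancilla Stinespring would
give. Equivalently, one must show that every CPTP map admits a Kraus
decomposition whose indices partition into fibres of size at most
$|A|$, each obeying $\sum_{\alpha\in S_k}K_\alpha^\dagger K_\alpha
\propto\1_A$. Exploiting the unitary freedom in Kraus decompositions
should make this possible, but it is the delicate combinatorial point
of the whole argument; once it is secured, the reduction to
Proposition~\ref{prop:dxd-unitary} yields the claimed bound
$d\le|A||B|$.
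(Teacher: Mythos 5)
Your overall strategy --- dilate $T$ to a unitary on an $|A||B|$-dimensional space, implement that unitary via Proposition~\ref{prop:dxd-unitary}, and note that the final partial trace is itself strictly incoherent --- is exactly the paper's. But the route you take to the dilation contains a genuine gap, and it is precisely the step you yourself flag as ``the delicate combinatorial point'': you need a Kraus decomposition of $T$ that partitions into at most $|B|$ fibres $S_1,\dots,S_{|B|}$, each of size at most $|A|$ and each satisfying $\sum_{\alpha\in S_k}K_\alpha^\dagger K_\alpha = p_k\1_A$. This amounts to writing $T$ as a convex combination of at most $|B|$ channels each of Kraus rank at most $|A|$, which is a much stronger statement than anything Choi's theorem gives you (Carath\'eodory only bounds the number of extremal components by roughly $|A|^2|B|^2$, not by $|B|$), and you offer no argument for it. There is also a second, unmentioned obstruction: even granting the fibre decomposition, assembling the fibre isometries $V_k:A\to B\otimes A''$ into a single unitary $U$ on $\CC^{|A||B|}$ requires $V_k^\dagger V_l=\delta_{kl}\1_A$, i.e.\ mutually orthogonal ranges, and the Stinespring freedom $V_k\mapsto(\1\otimes W_k)V_k$ does not in general let you arrange this without enlarging $A''$.

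The paper avoids all of this by not insisting that the classical mixture be carried by the ancilla state $\tau$ inside the unitary. It writes $T=\sum_\lambda p_\lambda T_\lambda$ with \emph{extremal} $T_\lambda$, each of which by Choi has at most $|A|$ Kraus operators and hence a unitary dilation $U_\lambda$ on $A\otimes B$ (dimension $|A||B|$) with a \emph{pure} incoherent ancilla $\ket{0}_B$; the number of terms $\lambda$ is irrelevant because sampling $\lambda$ classically and conditionally running the SIO protocol for $T_\lambda$ is itself a strictly incoherent operation, and every branch consumes the same single copy of $\Psi_{|A||B|}$. If you replace your compressed dilation by this convex decomposition into extremal maps --- treating the randomness as a free incoherent resource rather than as part of the unitary's input --- the rest of your argument (Proposition~\ref{prop:dxd-unitary} plus the SIO partial trace) goes through and recovers the theorem.
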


\begin{proof}
The channel $T$ is, first of all, a convex combination of
extremal CPTP maps $T_\lambda$, each of which has at most $|A|$ Kraus operators
\cite{Choi}: $T=\sum_\lambda p_\lambda T_\lambda$. Clearly, we only have to
prove the claim for the $T_\lambda$. Because of the bound on the Kraus operators,
each $T_\lambda$ has a unitary dilation $U_\lambda: A\otimes B \rightarrow B\otimes A$,
such that $T_\lambda(\rho) = \tr_A U_\lambda(\rho\otimes\proj{0}_B)U_\lambda^\dagger$,
with a fixed incoherent state $\ket{0} \in B$.

As the $U_\lambda$ act on a space of dimension $d=|A||B|$, we can invoke
the simulation according to Proposition~\ref{prop:dxd-unitary}.
\end{proof}

\begin{remark}
Comparing Theorems~\ref{theo:cptp-map} and \ref{theo:cptp-map-SIO},
we note that the latter always consumes more resources, but it is 
guaranteed to be implemented by strictly incoherent operations, 
a much narrower class than the incoherent operations. 

We leave it as an open question whether the resource consumption
of Theorem~\ref{theo:cptp-map} can be achieved by strictly incoherent
operations, or whether there is a performance gap between incoherent
and strictly incoherent operations.

Note that these two classes, incoherent operations and strictly
incoherent operations, are distinct as sets of CPTP maps, although
it is known that they induce the same possible state transformations
of a given state into a target state for qubits \cite{Chitambar-2} 
and for pure states in arbitrary dimension \cite{Zhu} (correcting the
earlier erroneous proof of the claim by \cite{Du1}; the SIO part
of the pure state transformations is due to \cite{Winter}). 
However, for the distillation of pure coherence at rate $C_r(\rho)$ \cite{Winter},
IO are needed and it remains unknown whether SIO can attain the same rate. 
Crucially, any destructive measurement, i.e.~any POVM followed by 
an incoherent state preparation, is IO, but the only measurements
allowed under SIO are of diagonal, i.e.~incoherent, observables.
\end{remark}

\medskip
The results so far are about the resources required for the exact implementation
of a single instance of a channel, in the worst case. It is intuitively
clear that some channels are easier to implement in the sense that fewer
resources are needed; e.g.~for the identity or any incoherent channel,
no coherent resource is required.

In the spirit of the previous section, we are interested in the minimum
resources required to implement many independent instances of $T$.

\begin{definition}
  \label{def:simulation}
  An $n$-block incoherent simulation of a channel $T:A\longrightarrow B$ 
  with error $\epsilon$ and coherent resource $\Psi_d$ (on space $D$) is an
  incoherent operation $\cI: A^n \otimes D \longrightarrow B^n$, such that
  $T'(\rho) := \cI(\rho \otimes \Psi_d)$ satisfies
  \[\begin{split}
    \epsilon &\geq \| T' - T^{\otimes n} \|_\diamond \\
         &= \sup_{\ket{\phi}\in A^n\otimes C} \| (T'\otimes\id_C)\phi - (T^{\otimes n}\otimes\id_C)\phi \|_1.
  \end{split}\]
  Here, $C$ is an arbitrary ancilla system; the error criterion of the simulation
  is known as diamond norm \cite{diamond} or completely bounded trace
  norm \cite{Paulsen}, see also \cite{Watrous}.
  
  The rate of the simulation is $\frac1n \log d$, and the 
  simulation cost of $T$, denoted $C_{\text{sim}}(T)$, is the smallest
  $R$ such that there exist $n$-block incoherent simulations with error
  going to $0$ and rate going to $R$ as $n\rightarrow\infty$.
\end{definition}

The best general bounds we have on the simulation cost are contained
in the following theorem. 

\begin{theorem}
  \label{thm:sim-bounds}
  For any CPTP map $T:A \longrightarrow B$,
  \begin{equation}
     \label{eq:sim-gen-bound}
     C_{\text{gen}}(T) \leq C_{\text{sim}}(T) \leq \log|B|.
  \end{equation}
  Furthermore,
  \begin{equation}
    \label{eq:sim-vs-power}
    C_{\text{sim}}(T) \geq \max\left\{ \sup_k P_r(T\otimes\id_k),
                                       \sup_k P_f(T\otimes\id_k) \right\},
  \end{equation}
  where we recall the definitions of the relative entropy coherence power, 
  $P_r(T) = \max_\rho C_r\bigl( T(\rho) \bigr) - C_r(\rho)$,
  and of the coherence of formation power,
  $P_f(T) = \max_\rho C_f\bigl( T(\rho) \bigr) - C_f(\rho)$.
\end{theorem}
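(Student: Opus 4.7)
My plan is to handle the three inequalities separately.

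For the upper bound $C_{\text{sim}}(T) \leq \log|B|$, I would appeal directly to Theorem~\ref{theo:cptp-map}: each instance of $T$ can be simulated by incoherent operations from a single copy of $\Psi_{|B|}$. Since in the standard product basis $\ket{\Psi_{|B|^n}} = \ket{\Psi_{|B|}}^{\otimes n}$, a single maximally coherent resource state of dimension $2^{n\log|B|}$ can be split by incoherent operations into $n$ copies of $\Psi_{|B|}$, yielding a block simulation of $T^{\otimes n}$ with zero error and rate $\log|B|$.

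For the bound $C_{\text{gen}}(T) \leq C_{\text{sim}}(T)$, I would use a concatenation argument. Given any achievable simulation rate $R$ and any coherence-generating protocol for $T$ of rate approaching $C_{\text{gen}}(T)$, we can first use $nR$ units of pure coherence to simulate $T^{\otimes n}$ and then run the coherence-generating protocol on the simulated output to extract approximately $n\,C_{\text{gen}}(T)$ pure coherence units. The composition is an incoherent transformation turning $\Psi_{2^{nR}}$ approximately into $\Psi_{2^{nC_{\text{gen}}(T)}}$, and since incoherent operations cannot increase the relative-entropy coherence of maximally coherent states, asymptotic continuity forces $R \geq C_{\text{gen}}(T)$.

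For the coherence-power lower bound (\ref{eq:sim-vs-power}), the central idea is to tensorize the input that realizes the coherence power. Fix $k$ and a state $\rho$ on $A\otimes C$ with $|C|=k$ that nearly achieves $P_r(T\otimes\id_k)$, and consider the input $\rho^{\otimes n}\otimes\Psi_d$, where $\Psi_d$ is the resource of the $n$-block simulation ($\log d = nR$). By additivity of $C_r$ on tensor products~\cite{Winter}, this state has relative entropy of coherence $n\,C_r(\rho)+nR$. The simulation composed with the identity on $C^n$ is incoherent, so $C_r$ of its output is bounded by the same quantity; on the other hand the output is $\epsilon$-close in trace norm to $\bigl((T\otimes\id_k)\rho\bigr)^{\otimes n}$, whose relative entropy of coherence equals $n\,C_r\bigl((T\otimes\id_k)\rho\bigr)$. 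Applying the asymptotic continuity of $C_r$ (Lemma~12 of \cite{Winter}), dividing by $n$, and letting $\epsilon\to 0$ yields $C_r\bigl((T\otimes\id_k)\rho\bigr)-C_r(\rho)\leq R$; taking the supremum over $\rho$ and $k$ gives $\sup_k P_r(T\otimes\id_k)\leq C_{\text{sim}}(T)$. The identical argument, replacing $C_r$ by $C_f$ and invoking the additivity and asymptotic continuity of the coherence of formation, handles the $P_f$ term.

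The main technical obstacle is the dependence on additivity and asymptotic continuity of $C_r$ and $C_f$ on tensor-product states of the specific form $\rho^{\otimes n}\otimes\Psi_d$; these are exactly the ingredients already available from~\cite{Winter}, so once they are invoked the rest is bookkeeping. A secondary subtlety is to ensure that in the concatenation argument for $C_{\text{gen}}(T)\leq C_{\text{sim}}(T)$ the errors compose benignly; this follows from stability of the diamond norm under composition with any channel and the continuity of coherence measures on the final output.
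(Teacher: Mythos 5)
Your proposal is correct and, for the two inequalities in Eq.~(\ref{eq:sim-gen-bound}), follows the paper's argument essentially verbatim: the upper bound from Theorem~\ref{theo:cptp-map} plus the factorization $\Psi_{|B|}^{\otimes n}=\Psi_{|B|^n}$, and the lower bound from composing simulation with generation and invoking monotonicity of coherence under incoherent operations. (Both you and the paper quietly pass over the point that the generation protocol uses $T$ adaptively while an $n$-block simulation provides $T^{\otimes n}$ as a parallel block; neither argument addresses this, so it is not a defect of your write-up relative to the paper's.) For Eq.~(\ref{eq:sim-vs-power}) your route is mildly but genuinely different. The paper first bounds $n(R+\epsilon)\geq P_X(T'\otimes\id_{k^n})$ by monotonicity, then needs a separate technical lemma (Lemma~\ref{lemma:power-continuous}) asserting asymptotic continuity of the coherence \emph{power}, as a functional on channels, with respect to the diamond norm, and finally uses tensor-power test states for superadditivity. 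You instead fix the near-optimal test state $\rho$ from the start, feed $\rho^{\otimes n}\otimes\Psi_d$ into the simulation, and apply the asymptotic continuity of $C_r$ and $C_f$ \emph{on states} (Lemmas~12 and~15 of \cite{Winter}) to the single output state, together with additivity. This bypasses Lemma~\ref{lemma:power-continuous} entirely; since that lemma is itself proved from the same state-level continuity bounds, your version is a legitimate inlining that shortens the proof, at the cost of not isolating the (independently useful) channel-continuity statement. The dimension-dependent error terms you incur are $O(n\epsilon\log(|B|k))$, which vanish after dividing by $n$ and sending $\epsilon\to 0$, exactly as needed.
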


\begin{proof}
We start with Eq.~(\ref{eq:sim-gen-bound}):
The upper bound is a direct consequence of Theorem~\ref{theo:cptp-map}.
The lower bound follows from the fact that $T$ is implemented
using maximally coherent states at rate $R = C_{\text{sim}}(T)$
and incoherent operations. Generation of entanglement on the other hand
uses $T$ and some more incoherent operations. Since incoherent operations
cannot increase the amount of entanglement, the overall process of
simulation and generation cannot result in a rate of coherence of
more than $R$.

Regarding Eq.~(\ref{eq:sim-vs-power}), the idea is that for $\epsilon>0$
and $n$ large enough, since the simulation implements a CPTP map
$T'$ that is within diamond norm $\epsilon$ from $T^{\otimes n}$,
using incoherent operations and $\Psi_2^{\otimes n(R+\epsilon)}$ as
a resource. Applying the simulation to the state $\rho^{\otimes n}$, results in 
$(T'\otimes\id_{k^n})\rho^{\otimes n}\approx \bigl((T\otimes\id_k)\rho\bigr)^{\otimes n}$,
hence we have an overall incoherent operation
\[
  \Psi_2^{\otimes n(R+\epsilon)} \otimes \rho^{\otimes n} 
                          \stackrel{\text{IO}}{\longmapsto} (T'\otimes\id_{k^n})\rho^{\otimes n}.
\]
By monotonicity of $C_X$ ($X\in\{r,f\}$) under IO, and $C_X(\Psi_2)=1$, this means
\[
  n(R+\epsilon) \geq C_X\bigl( (T'\otimes\id_{k^n})\rho^{\otimes n} \bigr) - C_X(\rho^{\otimes n}),
\]
where we have used additivity of $C_r$ and $C_f$ \cite{Winter}. Since this holds
for all $\rho$, we obtain
\[\begin{split}
  n(R+\epsilon) &\geq P_X( T'\otimes\id_{k^n} )                                     \\
                &\geq P_X( T^{\otimes n}\otimes\id_{k^n} ) - n \kappa_X \epsilon - 4 \\
                &\geq n P_X( T\otimes\id_k) - n \kappa_X \epsilon - 2,
\end{split}\]
invoking in the second line Lemma~\ref{lemma:power-continuous} below,
with $\kappa_r = 4\log|B|$ and $\kappa_f = \log|B|+\log k$,
and in the third a tensor power test state. Since $\epsilon$ can be
made arbitrarily small, and $n$ as well as $k$ arbitrarily large, the 
claim follows.
\end{proof}

\medskip
Here we state the technical lemma required in the proof of Theorem~\ref{thm:sim-bounds}.

\begin{lemma}
  \label{lemma:power-continuous}
  The relative entropy coherence power and the coherence of formation power 
  are asymptotically continuous with respect to the diamond norm metric on channels. 
  To be precise, for $T_1,T_2 : A \longrightarrow B$ 
  with $\frac12\| T_1-T_2 \|_\diamond \leq \epsilon$,
  \begin{align}
    \label{eq:conti-Pr}
    \bigl| P_r(T_1\otimes\id_k)-P_r(T_2\otimes\id_k) \bigr| &\leq 4\epsilon\log|B| + 2g(\epsilon), \\
    \label{eq:conti-Pf}
    \bigl| P_f(T_1\otimes\id_k)-P_f(T_2\otimes\id_k) \bigr| &\leq \epsilon(\log|B|\!+\!\log k) + g(\epsilon),
  \end{align}
  where $g(x) = (1+x) h_2\!\left(\!\frac{x}{1+x}\!\right) = (1+x)\log(1+x)-x\log x$.
\end{lemma}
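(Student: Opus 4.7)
The plan is to reduce both bounds to a uniform pointwise estimate on $|C_X(\sigma_1)-C_X(\sigma_2)|$ over input states $\rho$ on $A\otimes K$ (with $K$ the $k$-dimensional ancilla), where $\sigma_i := (T_i\otimes\id_k)\rho$, since the $C_X(\rho)$ subtracted in $P_X$ cancels between the two sides. Stabilization of the diamond norm gives $\frac{1}{2}\|\sigma_1-\sigma_2\|_1 \leq \epsilon$ on the joint output system $B\otimes K$. For the coherence-of-formation bound~(\ref{eq:conti-Pf}), I would then simply invoke the Alicki-Fannes-Winter continuity inequality for $C_f$ on $B\otimes K$ (total dimension $|B|k$), yielding $|C_f(\sigma_1)-C_f(\sigma_2)| \leq \epsilon\log(|B|k)+g(\epsilon)$ directly.

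The relative-entropy bound~(\ref{eq:conti-Pr}), whose dimension factor is only $\log|B|$, requires more care. The crucial observation is that because $T_i\otimes\id_k$ acts trivially on $K$, the marginals coincide: $\sigma_{1,K}=\sigma_{2,K}=\rho_K$. My plan is to use the chain-rule identity
\[
  C_r(\sigma_{BK}) = C_r(\sigma_K) + S(B|K)_{(\Delta_B\otimes\Delta_K)\sigma} - S(B|K)_\sigma,
\]
obtained from $S(\tau_{BK})=S(\tau_K)+S(B|K)_\tau$ applied to both $\tau=\sigma$ and $\tau=(\Delta_B\otimes\Delta_K)\sigma$, together with the fact that the $K$-marginal of $(\Delta_B\otimes\Delta_K)\sigma$ equals $\Delta_K\sigma_K$. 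Since $\sigma_{1,K}=\sigma_{2,K}$, the $C_r(\sigma_K)$ contributions cancel in the difference, leaving two conditional-entropy differences. To each I would apply the Alicki-Fannes-Winter continuity bound for $S(B|K)$, whose dimension factor is $2\epsilon\log|B|+g(\epsilon)$ and is independent of $|K|$, using also contractivity $\|\Delta\sigma_1-\Delta\sigma_2\|_1 \leq \|\sigma_1-\sigma_2\|_1$. Summing the two contributions gives $4\epsilon\log|B|+2g(\epsilon)$, and the final bound on $|P_r(T_1\otimes\id_k)-P_r(T_2\otimes\id_k)|$ follows by taking the supremum over $\rho$.

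The main obstacle, and the reason the two bounds look so different, is that a naive application of entropy continuity to $C_r=S(\Delta\sigma)-S(\sigma)$ would only deliver a dimension factor $\log(|B|k)$, matching the $C_f$ case. The key to recovering the sharper $\log|B|$ is the conditional-entropy reformulation above, which exploits two facts simultaneously: the ancilla marginal is preserved by $T_i\otimes\id_k$, and the Alicki-Fannes-Winter bound for conditional entropy depends only on the dimension of the conditioned system $B$, not on the dimension of the conditioning system $K$.
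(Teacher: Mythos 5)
Your proposal is correct and follows essentially the same route as the paper's proof: the $C_X(\rho)$ terms cancel after fixing a common maximizer, the $C_f$ bound is a direct application of the asymptotic continuity of coherence of formation, and the $C_r$ bound uses exactly the chain-rule decomposition $S(BC)=S(B|C)+S(C)$ to cancel the matching ancilla-marginal entropies and then applies the Alicki--Fannes--Winter conditional-entropy bound twice, yielding $4\epsilon\log|B|+2g(\epsilon)$. The only cosmetic difference is that you package the cancellation as $C_r(\sigma_K)$ dropping out, whereas the paper cancels the $S(C)$ terms pairwise between the $T_1$ and $T_2$ expressions; these are the same computation.
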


\begin{proof}
For the first bound, observe
\[\begin{split}
  \bigl| P_r&(T_1\otimes\id_k)-P_r(T_2\otimes\id_k) \bigr| \\
            &\leq \max_{\rho^{AC}}\ \bigl| C_r\bigl((T_1\otimes\id_k)\rho\bigr)
                                          - C_r\bigl((T_2\otimes\id_k)\rho\bigr) \bigr| \\
            &=    \max_{\rho^{AC}}\ \bigl| S(BC)_{(\Delta T_1\otimes\Delta)\rho}
                                            - S(BC)_{(T_1\otimes\id_k)\rho} \bigr. \\
            &\phantom{=====}
                                     \bigl. - S(BC)_{(\Delta T_2\otimes\Delta)\rho}
                                            + S(BC)_{(T_2\otimes\id_k)\rho} \bigr| \\
            &=    \max_{\rho^{AC}}\ \bigl| S(B|C)_{(\Delta T_1\otimes\Delta)\rho}
                                            - S(B|C)_{(T_1\otimes\id_k)\rho} \bigr. \\
            &\phantom{=====}
                                     \bigl. - S(B|C)_{(\Delta T_2\otimes\Delta)\rho}
                                            + S(B|C)_{(T_2\otimes\id_k)\rho} \bigr| \\
            &\leq \max_{\rho^{AC}}\ \bigl| S(B|C)_{(\Delta T_1\otimes\Delta)\rho}
                                            - S(B|C)_{(\Delta T_2\otimes\Delta)\rho} \bigr| \\
            &\phantom{=====}
                                     + \bigl| S(B|C)_{(T_1\otimes\id_k)\rho}
                                              - S(B|C)_{(T_2\otimes\id_k)\rho} \bigr| \\
            &\leq 2\bigl( 2\epsilon\log|B|+g(\epsilon) \bigr),
\end{split}\]
where in the first line we insert the same variable $\rho^{AC}$ to
maximise $P_r(T_j\otimes\id_k)$ and notice that in that case,
the term $C_r(\rho)$ cancels; then in the second line, we use the 
definition of the relative entropy of coherence and in the third
we use chain rule $S(BC) = S(B|C)+S(C)$ for the entropy, allowing
us to cancel matching $S(C)$ terms; in the fourth line we invoke
the triangle inequality, and finally the Alicki-Fannes bound for the
conditional entropy \cite{AlickiFannes} in the form given in \cite[Lemma~2]{Winter-S-cont}.

For the second bound, we start very similarly:
\[\begin{split}
  \bigl| P_f&(T_1\otimes\id_k)-P_f(T_2\otimes\id_k) \bigr| \\
            &\leq \max_{\rho^{AC}}\ \bigl| C_f\bigl((T_1\otimes\id_k)\rho\bigr)
                                          - C_f\bigl((T_2\otimes\id_k)\rho\bigr) \bigr| \\
            &\leq \epsilon (\log|B|+\log k) + g(\epsilon),
\end{split}\]
where the last line comes directly from the asymptotic continuity
of the coherence of formation \cite[Lemma~15]{Winter}.
We close the proof expressing our belief that it is possible to prove a
version of Eq.~(\ref{eq:conti-Pf}) where $k$ does not appear on the right hand side.
\end{proof}

\medskip
\begin{remark}
As a consequence, while for a channel $T$ that is close to an 
incoherent operation (in diamond norm), or in fact close to a MIO
operation, the coherence generating capacity $C_{\text{gen}}(T)$
is also close to $0$, we do not know at the moment whether the
same holds for the simulation cost $C_{\text{sim}}(T)$.
\end{remark}

\section{Qubit unitaries}
\label{sec:qubit}
In this section, we want to have a closer look at qubit unitaries, for which
we would like to find the coherence generating capacity and simulation cost.

To start our analysis, we note that a general $2\times 2$-unitary has four 
real parameters, but we can transform
unitaries into each other at no cost by preceding or following them by incoherent
unitaries, i.e.~combinations of the bit flip $\sigma_x$ and diagonal
(phase) unitaries $\begin{pmatrix} e^{i\alpha} & 0 \\ 0 & e^{i\beta} \end{pmatrix}$.
This implies an equivalence relation among qubit unitaries up to
incoherent unitaries. A unique representative of each equivalence class is
given by
\begin{equation}
  U = U(\theta) = \begin{pmatrix} c & -s \\ s & c \end{pmatrix},
\end{equation}
where $c=\cos\theta$ and $s=\sin\theta$ and with $0\leq \theta \leq \frac{\pi}{4}$,
so that $c \geq s \geq 0$.

\medskip
One can calculate $C_{\text{gen}}(U(\theta))$,
using the formula from Theorem~\ref{thm:C-gen-T}. 
\begin{figure}[ht]
  \includegraphics[width=8.5cm]{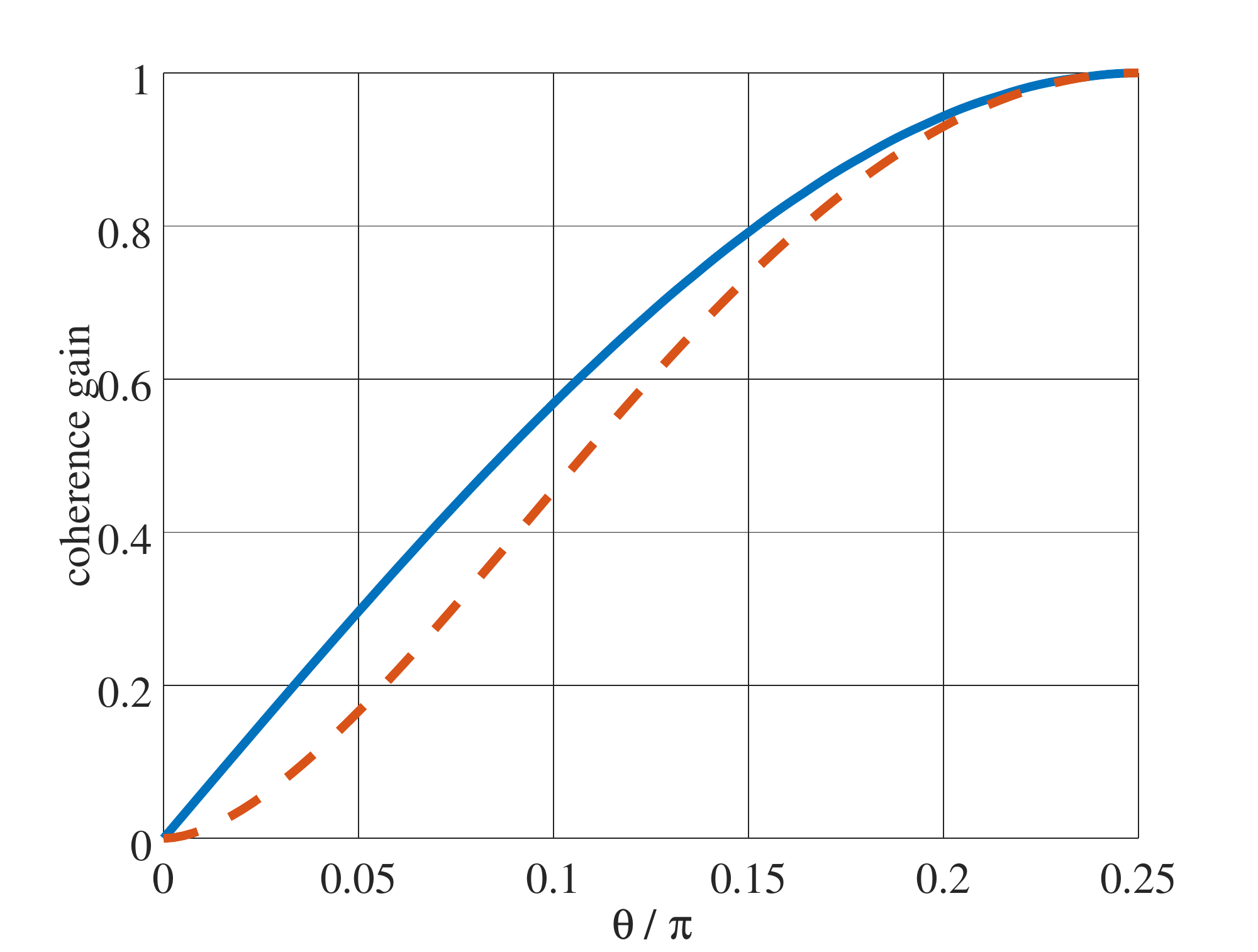}
  \caption{Plot of $C_{\text{gen}}(U(\theta)) = \widetilde{P}_r(U(\theta))$ as a function
           of $\theta \in [0,\frac{\pi}{4}]$ (solid blue line), 
           and comparison with $h_2(\cos^2\theta)$ (dashed red line),
           which is the coherence generated by an incoherent input state.
           In particular, for $\theta \approx 0$, the ratio between the two functions is unbounded.
           The angle $\theta$ is plotted as a fraction of $\pi$.}
  \label{fig:U-theta}
\end{figure}
Clearly, by choosing the test state $\varphi$ to be pure incoherent,
\begin{equation}\begin{split}
  C_{\text{gen}}(U(\theta)) &=    \widetilde{P}_r(U(\theta)) \\
                            &\geq h_2(c^2)
                             =    -c^2 \log c^2 - s^2 \log s^2,
\end{split}\end{equation}
with $h_2(x) = -x\log x-(1-x)\log(1-x)$ the binary entropy.
Perhaps surprisingly, however, this is in general not the optimal state
\cite[Cor.~5]{Maria:MSc} (see also \cite{Garcia-Diaz}), 
meaning that $\widetilde{P}_r(U(\theta))$ is attained at a coherent 
test state $\varphi$, although no closed form expression seems to be known.
In fact, simple manipulations show that we only need to optimise 
$C\!\left(U(\theta)\varphi U(\theta)^\dagger\right) - C(\varphi)$
over states $\ket{\varphi} = U(\alpha)\ket{0} = \cos\alpha\ket{0} + \sin\alpha\ket{1}$,
$0\leq\alpha\leq \pi$ (i.e.~no phases are necessary). The function to optimise
becomes $h_2\bigl( \cos^2(\alpha+\theta) \bigr) - h_2\bigl( \cos^2 \alpha \bigr)$. 
Its critical points satisfy the transcendental equation
\begin{equation}
  \sin(2\alpha+2\theta)\ln\tan^2(\alpha+\theta) = \sin(2\alpha)\ln\tan^2\alpha,
\end{equation}
which can be solved numerically. Fig.~\ref{fig:U-theta} shows that 
$C_{\text{gen}}(U(\theta)) = \widetilde{P}_r(U(\theta)) > h_2(\cos^2 \theta)$
for across the whole interval, except at the endpoints $\theta = 0,\, \frac{\pi}{4}$; 
in Fig.~\ref{fig:optimal-alpha} we plot the optimal $\alpha$ for $U(\theta)$.
\begin{figure}[ht]
  \includegraphics[width=8.6cm]{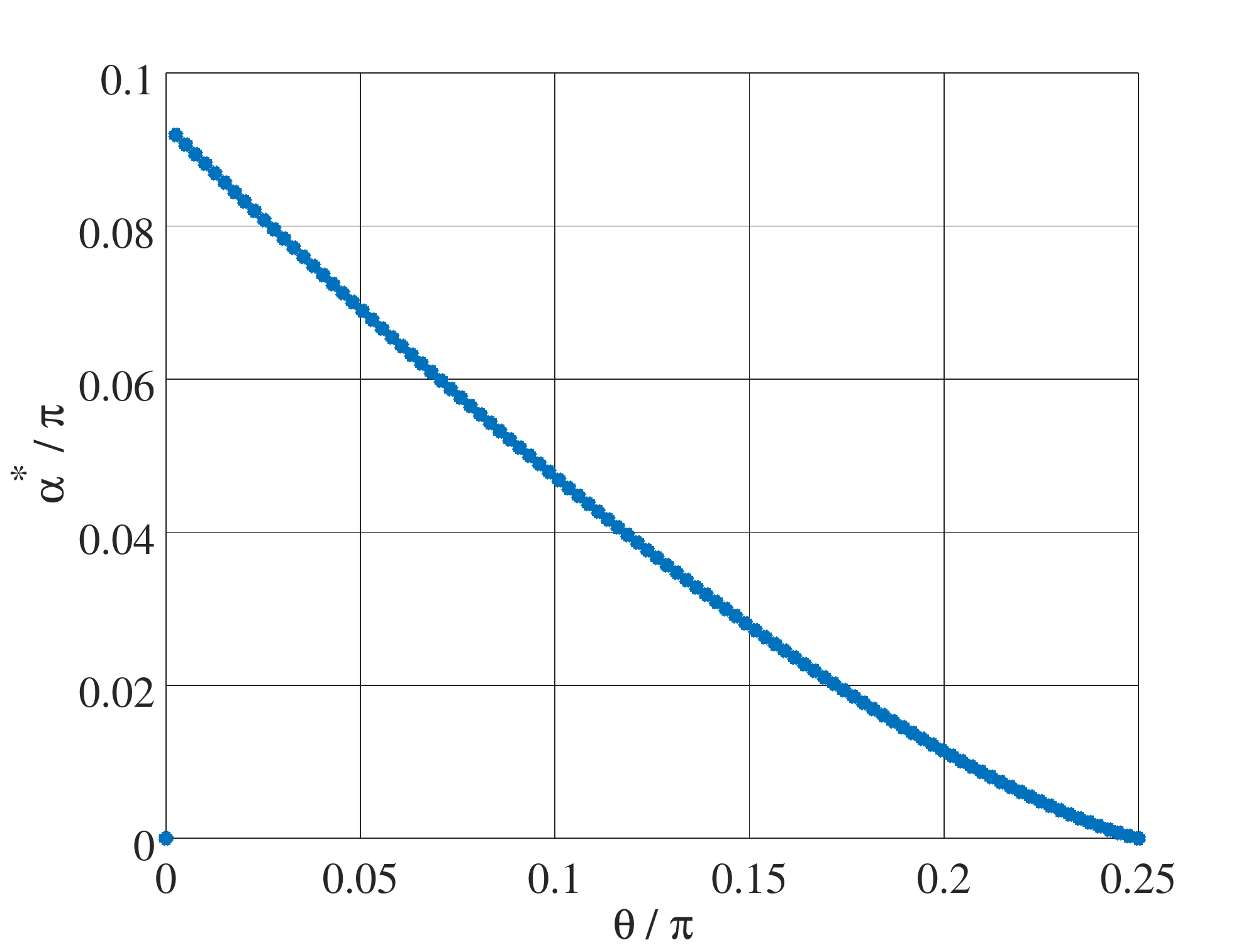}
  \caption{The optimal value of $\alpha$ attaining
           $C_{\text{gen}}(U(\theta)) = \widetilde{P}_r(U(\theta)) 
            = h_2\bigl( \cos^2(\alpha+\theta) \bigr) - h_2\bigl( \cos^2 \alpha \bigr)$,
           as a function of $\theta \in [0,\frac{\pi}{4}]$.
           It is nonzero except at the endpoints $\theta = 0,\,\frac{\pi}{4}$.
           Both angles, $\theta$ and $\alpha^*$, are plotted as fractions of $\pi$.}
  \label{fig:optimal-alpha}
\end{figure}

On the other hand, regarding the implementation of these unitary channels, 
all we can say for the moment is that $C_{\text{sim}}(U(\theta)) \leq 1$, 
because we can implement each instance of the qubit unitary using a qubit 
maximally coherent state $\Psi_2$. 
It is perhaps natural to expect that one could get away with a smaller
amount of coherence, but it turns out that with a two-dimensional 
resource state this is impossible.

\begin{proposition}
  \label{prop:U-theta-sim}
  The only qubit coherent resource state $\ket{r} \in \CC^2$ that permits
  the implementation of $U(\theta)$, $0 < \theta \leq \frac{\pi}{4}$, is the
  maximally coherent state.
  
  Furthermore, any two-qubit incoherent operation $\mathcal{I}$ such that
  $\mathcal{I}(\rho\otimes\proj{r}) = U(\theta)\rho U(\theta)^\dagger \otimes \sigma$
  for general $\rho$, is such that the state $\sigma$ left behind in the ancilla is
  necessarily incoherent.
\end{proposition}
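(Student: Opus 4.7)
The plan is to reverse-engineer the structure of each incoherent Kraus operator $K_\alpha$ of $\mathcal{I}$ (with $\mathcal{I}(\cdot)=\sum_\alpha K_\alpha(\cdot)K_\alpha^\dagger$ and $\sum_\alpha K_\alpha^\dagger K_\alpha=\1$) from the requirement $\mathcal{I}(\rho\otimes\proj{r})=U(\theta)\rho U(\theta)^\dagger\otimes\sigma$ for every qubit state $\rho$. Writing $\ket{r}=r_0\ket{0}+r_1\ket{1}$, both $r_0$ and $r_1$ are nonzero since $\ket{r}$ is assumed coherent.

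The first observation is that for any pure $\ket{\phi}$ the output $U(\theta)\proj{\phi}U(\theta)^\dagger\otimes\sigma$ is a product state with pure first factor, so each rank-one summand on the left must itself be a product state: $K_\alpha(\ket{\phi}\otimes\ket{r})=c_\alpha(\phi)\,U(\theta)\ket{\phi}\otimes\ket{\xi_\alpha(\phi)}$ for some scalar $c_\alpha(\phi)$ and unit vector $\ket{\xi_\alpha(\phi)}$. Using linearity of $K_\alpha$ in $\ket{\phi}$ and the linear independence of $U(\theta)\ket{0}$ and $U(\theta)\ket{1}$, I match coefficients on $\ket{\phi}=\phi_0\ket{0}+\phi_1\ket{1}$ to conclude that $\ket{\xi_\alpha(\phi)}=\ket{\xi_\alpha}$ and $c_\alpha(\phi)=a_\alpha$ are both independent of $\phi$, yielding $K_\alpha(\ket{\phi}\otimes\ket{r})=a_\alpha\,U(\theta)\ket{\phi}\otimes\ket{\xi_\alpha}$.

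Next I would derive the second claim by showing each $\ket{\xi_\alpha}$ is incoherent, so that $\sigma=\sum_\alpha|a_\alpha|^2\proj{\xi_\alpha}$ is a convex combination of incoherent pure states and hence incoherent. This is a basis-counting contradiction: incoherence of $K_\alpha$ forces $K_\alpha\ket{00}$ and $K_\alpha\ket{01}$ each to be a scalar multiple of a single computational basis vector of $\CC^2\otimes\CC^2$, so $K_\alpha(\ket{0}\otimes\ket{r})=r_0K_\alpha\ket{00}+r_1K_\alpha\ket{01}$ has support on at most two basis vectors. However, the required output $a_\alpha(c\ket{0}+s\ket{1})\otimes\ket{\xi_\alpha}$ has all four basis components nonzero whenever both components of $\ket{\xi_\alpha}$ are nonzero (using $c,s,r_0,r_1\neq 0$). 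Hence $\ket{\xi_\alpha}\in\{\ket{0},\ket{1}\}$ up to phase.

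For the first claim, I then enumerate the finitely many residual options for each $K_\alpha$. Fixing $\ket{\xi_\alpha}=\ket{0}$ (the other case is symmetric), the pair $\{K_\alpha\ket{00},K_\alpha\ket{01}\}$ must be a scaled permutation of $\{\ket{00},\ket{10}\}$, and $\{K_\alpha\ket{10},K_\alpha\ket{11}\}$ likewise a scaled permutation of the same pair, with coefficients forced by matching against $a_\alpha(c\ket{00}+s\ket{10})$ and $a_\alpha(-s\ket{00}+c\ket{10})$, respectively. Applying $\sum_\alpha K_\alpha^\dagger K_\alpha=\1$, the diagonal entries at $\ketbra{00}{00}$ yield identities of the form $\sum_\alpha e_\alpha|a_\alpha|^2=|r_0|^2$ with $e_\alpha\in\{c^2,s^2\}$, while the off-diagonal entries at $(00,10)$ and $(00,11)$ must vanish, forcing matched ``option-classes'' of the $K_\alpha$ to carry equal total weight. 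Combining these balance equations with $\sum_\alpha|a_\alpha|^2=1$ and $c^2+s^2=1$ collapses the system to $|r_0|^2=|r_1|^2=\tfrac12$, valid whenever $c\neq s$, i.e., $\theta\in(0,\pi/4)$; the boundary case $\theta=\pi/4$ follows from the monotonicity bound $C_r(\proj{r})\geq C_{\text{gen}}(U(\pi/4))=1$, which by itself forces $\ket{r}$ to be maximally coherent. The main obstacle is this combinatorial last step: one must organise the four option-classes per Kraus operator, check that the off-diagonal cancellations pin down the masses uniquely, and verify that ``inert'' Kraus operators with $a_\alpha=0$ (which vanish on inputs of the form $\ket{\phi}\otimes\ket{r}$ but may act nontrivially on the orthogonal complement) cannot leak in to spoil the normalization count.
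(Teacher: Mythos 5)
Your proposal is correct and follows essentially the same route as the paper: you reverse-engineer the admissible incoherent Kraus operators from the required action on $\ket{\phi}\otimes\ket{r}$, find the same four option-classes per Kraus operator (the paper parametrises them via $T_0,T_1$ and the residual block $R$ rather than column-by-column), and impose $\sum_\alpha K_\alpha^\dagger K_\alpha=\1$, while your support-counting argument for the second claim is exactly the paper's coherence-rank argument made explicit. The steps you flag as obstacles do close as you anticipate --- the $(00,01)$ and $(10,11)$ entries of $\sum_\alpha K_\alpha^\dagger K_\alpha$ force all ``inert'' Kraus operators to vanish, and the $(00,10)$ and $(00,11)$ cancellations give equal total weight to matched option-classes, whence $|r_0|^2=\tfrac{c^2+s^2}{2}=\tfrac12$ for every $\theta\in(0,\pi/4]$ (so the separate treatment of $\theta=\pi/4$ is not even needed); this is precisely what the paper compresses into ``it can be verified.''
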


\begin{proof}
We want to know for which state $\ket{r}=c'\ket{0}+s'\ket{1}$ the transformation
$\ket{\psi}\ket{r}\xrightarrow{IO}(U(\theta)\ket{\psi})\ket{0}$ is possible, 
for a general state $\ket{\psi}$. 
Without loss of generality, the incoherent Kraus operators achieving 
the transformation have the following general form:
\begin{equation}
  \label{eq:K}
  K = \lambda \bigl( U(\theta) \otimes \ket{0}\!\bra{r}+R\otimes \ket{0}\!\bra{r^\perp} \bigr),
\end{equation} 
where $\ket{r^\perp}=s'\ket{0}-c'\ket{1}$ is the vector orthogonal to $\ket{r}$. 
We now need to find the form of $R$ such that $K$ is incoherent. 
For that, we impose incoherence of $K$ when tracing out the ancillary part: 
$\bra{0}^AK\ket{0}^A =: T_0$ and $\bra{1}^A K \ket{1}^A =: T_1$, where 
$T_0$ and $T_1$ must be $2$-dimensional incoherent operators. We then obtain
that $R=s'T_0-c'T_1$ and $\lambda U=c'T_0+s'T_1$. The latter condition
enforces that either 
$T_0 \propto \begin{pmatrix}
1 & 0 \\ 
0 & 1
\end{pmatrix}$ and 
$T_1 \propto \begin{pmatrix}
0 & -1 \\ 
1 & 0
\end{pmatrix}$ or viceversa; or 
$T_0 \propto \begin{pmatrix}
1 & -1 \\ 
0 & 0
\end{pmatrix}$ and 
$T_1 \propto \begin{pmatrix}
0 & 0 \\ 
1 & 1
\end{pmatrix}$ or viceversa. 
From these possibilities, we get $4$ possible $R$ matrices, which define $4$ 
different Kraus operators $K_i$ defined, according to Eq.~(\ref{eq:K}),
by $R_i$ matrices as follows:
\begin{align*}
  R_1 &= \begin{pmatrix}
           c\frac{s'}{c'} & -s\frac{c'}{s'} \\ 
           s\frac{c'}{s'} & c\frac{s'}{c'}
          \end{pmatrix}, \\
  R_2 &= \begin{pmatrix}
           -c\frac{c'}{s'} & s\frac{s'}{c'} \\ 
           -s\frac{s'}{c'} & -c\frac{c'}{s'}
         \end{pmatrix}, \\
  R_3 &= \begin{pmatrix}
           -c\frac{c'}{s'} & -s\frac{c'}{s'} \\ 
           -s\frac{s'}{c'} & c\frac{s'}{c'}
         \end{pmatrix}, \\
  R_4 &= \begin{pmatrix}
           c\frac{s'}{c'} & s\frac{s'}{c'} \\ 
           s\frac{c'}{s'} & -c\frac{c'}{s'}
         \end{pmatrix},
\end{align*}
and the general incoherent Kraus operator is $K = \lambda_i K_i$ ($i=1,2,3,4$).
Finally, after imposing $\sum_i|\lambda_i|^2K^\dagger_i K_i=\1$, 
we obtain the following conditions on $R_i$ and $\lambda_i$:
\begin{align*}
  \sum_i|\lambda_i|^2 &= 1, \\
  \sum_i|\lambda_i|^2 R_i^\dagger R_i &= \1, \\
  \sum_i|\lambda_i|^2 R_i &=0.
\end{align*}
It can be verified that these conditions are only fulfilled when $|c'|=|s'|=\frac{1}{\sqrt{2}}$, 
i.e.~$\ket{r}$ is maximally coherent.

If the incoherent implementation of the unitary, instead of mapping
two qubits (input and resource state) to one (output), but to two (output
plus residual resource), 
i.e.~$\mathcal{I}(\rho\otimes\proj{r}) = U(\theta)\rho U(\theta)^\dagger \otimes \sigma$,
then first of all $\sigma$ has to be the same irrespective of the state $\rho$.
Otherwise we would be able, by measuring $\sigma$, to learn some information about
$\rho$ without disturbing it. 
Now consider a pure incoherent input state $\rho = \proj{0}$, and
note that the desired output state $U(\theta)\ket{\psi}$ has nontrivial coherence.
But now observe that $\mathcal{I}$ takes in a state of coherence rank $2$ \cite{Winter},
and produces a product of a pure state of coherence rank $2$ with another
state. Since the coherence rank cannot increase, even under individual 
Kraus operators \cite{Winter}, it must be the case that $\sigma$ is incoherent.
\end{proof}

\medskip
This result might suggest an
irreversibility between simulation and coherence generation for these
unitaries, but we point out that it does not preclude the possibility of 
simulations using a higher rank, yet less coherent, resource state 
(cf.~\cite{stahlke}, where the analogue is demonstrated for LOCC 
implementation of bipartite unitaries using entangled resources); 
or of a simulation of many instances of $U(\theta)$ at a cost lower 
than $1$ per unitary.

\section{Conclusion and Outlook}
\label{sec:outlook}
We have shown that using a maximally coherent state and strictly incoherent 
operations, we can implement any unitary on a system (using $\Psi$ 
for a $d$-dimensional unitary), and via the Stinespring dilation any 
CPTP map (using $\Psi_{d^2}$ for a channel acting on a $d$-dimensional system). 
By teleportation, we prove that for incoherent operations and a $d$-dimensional  
maximally coherent state any noisy channel can be implemented.

Vice versa, every incoherent operation gives rise to some capacity of
generating pure coherence by using it asymptotically many times, and
we have given capacity bounds in general, and a single-letter formula
for the case of unitaries. We also found the additive upper bounds
$\mathbb{P}_r(T)$ and $\mathbb{P}_f(T)$ on the
coherence generation capacity $C_{\text{gen}}(T)$, even though we do
not know whether these numbers are efficiently computable due to the
presence of the extension $\otimes\id_k$, nor whether these extensions
are even necessary.
It is open at the moment whether the
coherence generation capacity $C_{\text{gen}}(T)$ itself is additive for
general tensor product channels, and likewise the lower bound
given in Theorem~\ref{thm:C-gen-T}, 
$\widetilde{\mathbb{P}}_r(T) = \sup_k \widetilde{P}_r(T\otimes\id_k)$;
at least for isometric channels they are.

The coherence generating capacity is never larger than the simulation 
cost, but in general these two numbers will be different. As an extreme
case, consider any CPTP map $T$ that is not incoherent, but is a so-called
``maximally incoherent operation'' (MIO, cf.~\cite{Streltsov-rev}),
meaning that $T(\rho)\in\Delta$ for all $\rho\in\Delta$. This
class was considered in \cite{Brandao3}, and it makes coherence theory
asymptotically reversible \cite{Winter}, all states $\rho$ being 
equivalent to $C_r(\rho)$ maximally coherent qubit states. Such maps exist 
even in qubits, cf.~\cite{Chitambar-erratum} correcting \cite[Thm.~21]{Chitambar-2}.
We expect the simulation cost of any such $T$ to be positive,
$C_{\text{sim}}(T) > 0$.
At the same time, $C_{\text{gen}}(T)=0$ by Theorem~\ref{thm:C-gen-T},
because the relative entropy of coherence is a MIO monotone, and the
tensor product of MIO transformations is MIO.
To obtain an example, we can take any MIO channel for which there exists
a state $\rho$ such that $C_f\bigl(T(\rho)\bigr) > C_f(\rho)$, since
by Theorem \ref{thm:sim-bounds}, $C_{\text{sim}}(T)$ is lower bounded 
by the difference of the two. 
(As an aside, we note that this cannot be realised in qubits,
because for qubits, any state transformation possible under MIO is 
already possible under IO, for which $C_f$ is a monotone \cite{Chitambar-2,Hu}.)
Concretely, consider the following states on a $2d$-dimensional
system $A$, which could be called \emph{coherent flower states}, since
their corresponding maximally correlated states (cf.~\cite{Adesso-corr,Winter})
are the well-known flower states \cite{HHHO}. 
We write them as $2\times 2$-block matrices,
\begin{equation}
  \rho_d = \frac{1}{2d} \begin{pmatrix} \1 & U \\ U^\dagger & \1 \end{pmatrix},
\end{equation}
where $U$ is the $d$-dimensional discrete Fourier transform matrix.
Via the correspondence between $C_r$ and the relative entropy of
entanglement, and between $C_f$ and the entanglement of formation,
respectively, of the associated 
 state, we know that $C_r(\rho_d) = 1$
and $C_f(\rho_d) = 1+\frac12\log d$ \cite{HHHO}. By the results of
\cite{Brandao3}, however, for every $\epsilon>0$ and sufficiently large
$n$, there exists a MIO transformation
$T^{(n)}:(\CC^2)^{\otimes n(1+\epsilon)} \longrightarrow A^n$
with
\begin{equation}
  \rho^{(n)} := T^{(n)}\!\left(\Psi_2^{\otimes n(1+\epsilon)}\right) 
                              \stackrel{\epsilon}{\approx} \rho_d^{\otimes n}.
\end{equation}
By the asymptotic continuity of $C_f$ \cite{Winter}, we have
$C_f(\rho^{(n)}) \geq n\left(1+\frac12\log d\right) - n \epsilon \log(2d) - g(\epsilon)$,
while of course the preimage $\rho_0 = \Psi_2^{\otimes n(1+\epsilon)}$ has 
$C_f(\rho_0) \leq n + n\epsilon$, so for $\epsilon$ small enough and $n$ large enough,
we have a gap:
\begin{equation}
  C_{\text{sim}}\bigl(T^{(n)}\bigr) \geq \frac{n}{2}\log d - n\epsilon(2+\log d) - g(\epsilon) > 0,
\end{equation}
invoking Theorem~\ref{thm:sim-bounds}.
Thus, while \emph{states} in the resource theory of coherence cannot
exhibit bound resource --- indeed, it was observed in \cite{Winter} that
vanishing distillable coherence, $C_r(\rho)=0$, implies vanishing
coherence cost $C_f(\rho)=0$ ---, operations can have bound coherence,
and the gap can be large on the scale of the logarithm of the channel dimension.
We observe that this effect can only occur for maximally incoherent operations,
which are precisely the ones with $C_{\text{gen}}(T)=0$. We argued already
that MIO channels have zero coherence generating capacity; in the other
direction, if $T$ is not MIO, it means that there exists an incoherent
state $\rho$ such that $T(\rho)$ has coherence, and this can be
distilled at rate $C_r(T(\rho))$ \cite{Winter}.
As MIO are closed under forming tensor products, it also follows
that $C_{\text{gen}}$ does not exhibit superactivation.

This example and the subsequent considerations
raise the question of how our theory 
would change if we considered all MIO transformations as free operations. 
By definition, the above example -- by virtue of being MIO -- has zero
MIO-simulation cost, so there is no bound coherence any more. It may still
be the case that there is in general a difference between MIO-simulation cost
and MIO-coherence generating capacity, but deciding this possibility is
beyond the scope of the present investigation. 
We only note that Theorem~\ref{thm:C-gen-T} gives us a single-letter 
formula for the MIO-coherence generating capacity, namely 
\begin{equation}
  C_{\text{gen}}^{\text{MIO}}(T) = \mathbb{P}_r(T) 
             = \sup_{\rho \text{ on } A\otimes C} C_r\bigl((T\otimes\id)\rho\bigr) - C_r(\rho),
\end{equation}
the complete relative entropy coherence power of $T$. The supremum is
over all auxiliary systems $C$ and mixed states $\rho$ on $A\otimes C$.
Indeed, the upper bound of Eq.~(\ref{eq:T-upper}) still applies, because
$C_r$ is MIO monotone, which is all we needed in the proof of Theorem~\ref{thm:C-gen-T}.
For the lower bound, that it is attainable follows the same idea 
as the proof of Eq.~(\ref{eq:T-lower}), only that we can now use an 
arbitrary mixed state $\rho$ in the argument, since its coherence cost under 
MIO equals $C_r(\rho)$ \cite{Brandao3,Winter}.

\medskip
One of the most exciting possibilities presented by the point of view
of channels as coherence resources is the transformation of channels
into channels by means of preceding and post-processing a given one by
incoherent operations to obtain a different one. The fundamental
question one can ask here is how efficiently, i.e.~at which rate 
$R(T_1\!\rightarrow\! T_2)$ one can transform asymptotically many
instances of $T_1$ into instances of $T_2$, with asymptotically
vanishing diamond norm error. To make nontrivial statements about
these rates, one would need to extend some of the various coherence monotones
that have been studied for states to CPTP maps.




\acknowledgments
The present project was started during two visits of KBD to the 
Quantum Information Group at the Universitat Aut\`onoma de Barcelona. 
The authors thank Ludovico Lami, Swapan Rana, Alexander Streltsov,
Michalis Skotiniotis, and Mark Wilde for interesting discussions
and comments on the coherence of channels.
MGD is supported by a doctoral studies fellowship of the
Fundaci\'on ``la Caixa''. 
AW is or was supported by the European Commission (STREP ``RAQUEL'')
and the ERC (Advanced Grant ``IRQUAT'').
The authors acknowledge furthermore funding by
the Spanish MINECO (grants FIS2013-40627-P and FIS2016-86681-P),
with the support of FEDER funds, and by the 
Generalitat de Catalunya CIRIT, project 2014-SGR-966.



\begin{thebibliography}{99}

\bibitem{Brandao1} F. G. S. L. Brand\~{a}o, M. Horodecki, J. Oppenheim, J. M. Renes
  and R. W. Spekkens, Phys. Rev. Lett. \textbf{111}, 250404 (2013). 

\bibitem{Brandao2} F. G. S. L. Brand\~{a}o, M. Horodecki, N. H. Y. Ng, J. Oppenheim 
  and S. Wehner, Proc. Nat. Acad. Sci. \textbf{112}(11), 3275-3279 (2015).

\bibitem{Horodecki} M. Horodecki and J. Oppenheim, Nature Commun. \textbf{4}, 2059 (2013).

\bibitem{Faist} P. Faist, F. Dupuis, J. Oppenheim and R. Renner,
  Nature Commun. \textbf{6}, 7669 (2015).

\bibitem{Gour} G. Gour, M. P. M\"uller, V. Narasimhachar, R. W. Spekkens 
  and N. Y. Halpern, Phys. Rep. \textbf{583}, 1-58 (2015).

\bibitem{Narasimhachar} V. Narasimhachar and G. Gour,
  Nature Commun. \textbf{6}, 7689 (2015).

\bibitem{Gour1} G. Gour, I. Marvian and R. W. Spekkens,
  Phys. Rev. A \textbf{80}, 012307 (2009).

\bibitem{Marvian} I. Marvian and R. W. Spekkens, Nature Commun. \textbf{5}, 3821 (2014).


\bibitem{Brandao3} F. G. S. L. Brand\~{a}o and G. Gour,
  Phys. Rev. Lett. \textbf{115}, 070503 (2015).

\bibitem{Plenio1} M. B. Plenio and S. Virmani, Quantum Inf. Comput. \textbf{7}(1), 1-51 (2007).

\bibitem{Vedral} V. Vedral and M. B. Plenio, Phys. Rev. A \textbf{57}(3), 1619-1633 (1998).

\bibitem{HHHH} R. Horodecki, P. Horodecki, M. Horodecki and K. Horodecki,
  Rev. Mod. Phys. \textbf{81}(2), 865-942 (2009).

\bibitem{Baumgratz} T. Baumgratz, M. Cramer and M. B. Plenio, 
  Phys. Rev. Lett. \textbf{113}, 140401 (2014).

\bibitem{Aaberg} J. \AA{}berg,
  arXiv:quant-ph/0612146 (2006).

\bibitem{Plenio-2} T. Theurer, N. Killoran, D. Egloff and M. B. Plenio,
  arXiv[quant-ph]:1703.10943 (2017).

\bibitem{Streltsov-rev} A. Streltsov, G. Adesso and M. B. Plenio,
  arXiv[quant-ph]:1609.02439 (2016-2017).

\bibitem{Winter} A. Winter and D. Yang, Phys. Rev. Lett. \textbf{116}, 120404 (2016). 

\bibitem{Yadin} B. Yadin, J.-J. Ma, D. Girolami, M. Gu and V. Vedral, 
  Phys. Rev. X \textbf{6}, 041028 (2016).

\bibitem{Du1} S. Du, Z. Bai and Y. Guo, Phys. Rev. A \textbf{91}, 052120 (2015); 
  erratum Phys. Rev. A {\bf 95}, 029901(E) (2017).

\bibitem{Du2} S. Du, Z. Bai and X. Qi,
  Quantum Inf. Comput. \textbf{15}(15\&{}16), 1307-1316 (2015).

\bibitem{Bennett} C. H. Bennett, A. W. Harrow, D. W. Leung and J. A. Smolin, 
  IEEE Trans. Inf.  Theory, \textbf{49}(8), 1895-1911 (2003).

\bibitem{Collins} D. Collins, N. Linden and S. Popescu,
  Phys. Rev. A \textbf{64}, 032302 (2001).

\bibitem{Garcia-Diaz} M. Garc\'{\i}a D\'{\i}az, D. Egloff and M. B. Plenio,
  Quantum Inf. Comput. \textbf{16}(15\&{}16), 1282-1294 (2016).

\bibitem{coherence-power} K. Bu, A. Kumar, L. Zhang and J. Wu, 
  \emph{Phys. Lett. A} {\bf 381}(19), 1670-1676 (2017).

\bibitem{Chitambar} E. Chitambar and M.-H. Hsieh,
  Phys. Rev. Lett. \textbf{117}, 020402 (2016).

\bibitem{Choi} M.-D. Choi, Linear Alg. Appl. \textbf{10}(3), 285-290 (1975).

\bibitem{Chitambar-2} E. Chitambar and G. Gour,
  Phys. Rev. A \textbf{94}, 052336 (2016).

\bibitem{Zhu} H. Zhu, Z. Ma, Z. Cao, S.-M. Fei and V. Vedral,
  arXiv[quant-ph]:1704.01935 (2017).

\bibitem{diamond} D. Aharonov, A. Kitaev and N. Nisan, in:
  Proc.~13th Annual ACM Symposium on Theory of Computing (STOC), 
  Dallas TX, 24-26 May 1998, pp. 20-30, ACM 1998.

\bibitem{Paulsen} V. I. Paulsen, \emph{Completely Bounded Maps and Operator Algebras},
  Cambridge Studies in Advanced Mathematics, Cambridge University Press, Cambridge, 2002.

\bibitem{Watrous} J. Watrous, Theory Comput. \textbf{5}(11), 217-238 (2009).

\bibitem{AlickiFannes} R. Alicki and M. Fannes,
  J. Phys. A Math. Gen. \textbf{37}(5), L55-L57 (2004).

\bibitem{Winter-S-cont} A. Winter, Commun. Math. Phys. \textbf{347}(1), 291-313 (2016).

\bibitem{Maria:MSc} M. Garc\'ia D\'iaz, \emph{Coherence power of quantum evolutions},
  M.Sc. thesis, Universit\"at Ulm, Institute of Physics, July 2016.

\bibitem{stahlke} D. Stahlke and R. B. Griffiths, Phys. Rev. A \textbf{84}, 032316 (2011).

\bibitem{Chitambar-erratum} E. Chitambar and G. Gour,
  Phys. Rev. A \textbf{95}, 019902(E) (2017).

\bibitem{Hu} X.-Y. Hu, Phys. Rev. A \textbf{94}, 012326 (2016).

\bibitem{Adesso-corr} A. Streltsov, U. Singh, H. S. Dhar, M. N. Bera and G. Adesso,
  Phys. Rev. Lett. \textbf{115}, 020403 (2015).

\bibitem{HHHO} K. Horodecki, M. Horodecki, P. Horodecki and J. Oppenheim,
  Phys. Rev. Lett. \textbf{94}, 200501 (2005).

\end{thebibliography}
\end{document}